\definecolor{shadecolor}{rgb}{1,0,0}
\newtheorem{theorem}{Theorem}
\begin{document}
\title{AI-Assisted Slicing-Based Resource Management for Two-Tier Radio Access Networks}

\author{ Conghao~Zhou,~\IEEEmembership{Member,~IEEE,}
            Jie~Gao,~\IEEEmembership{Senior~Member,~IEEE,}
            Mushu~Li,~\IEEEmembership{Member,~IEEE,}
            Xuemin~(Sherman)~Shen,~\IEEEmembership{Fellow,~IEEE,}
            Weihua~Zhuang,~\IEEEmembership{Fellow,~IEEE,}
            Xu~Li,~and Weisen~Shi

\thanks{C.~Zhou, X.~Shen, and W.~Zhuang are with the Department of Electrical and Computer Engineering, University of Waterloo, Waterloo, ON, N2L 3G1, Canada, (e-mail:\{c89zhou, sshen, wzhuang\}@uwaterloo.ca).}

\thanks{J.~Gao and M.~Li were with the Department of Electrical and Computer Engineering, University of Waterloo, Waterloo, ON, N2L 3G1, Canada while conducting this research work, (email:\{jie.gao, m475li\}@uwaterloo.ca).}

\thanks{X.~Li and W.~Shi are with Huawei Technologies Canada Inc., Ottawa, Ontario, K2K 3J1, Canada, (email:\{xu.lica, weisen.shi1\}@huawei.com).}
}   
\maketitle

\begin{abstract} 

While network slicing has become a prevalent approach to service differentiation, radio access network (RAN) slicing remains challenging due to the need of substantial adaptivity and flexibility to cope with the highly dynamic network environment in RANs. In this paper, we develop a slicing-based resource management framework for a two-tier RAN to support multiple services with different quality of service (QoS) requirements. The developed framework focuses on base station (BS) service coverage (SC) and interference management for multiple slices, each of which corresponds to a service. New designs are introduced in the spatial, temporal, and slice dimensions to cope with spatiotemporal variations in data traffic, balance adaptivity and overhead of resource management, and enhance flexibility in service differentiation. Based on the proposed framework, an energy efficiency maximization problem is formulated, and an artificial intelligence (AI)-assisted approach is proposed to solve the problem. Specifically, a deep unsupervised learning-assisted algorithm is proposed for searching the optimal SC of the BSs, and an optimization-based analytical solution is found for managing interference among BSs. Simulation results under different data traffic distributions demonstrate that our proposed slicing-based resource management framework, empowered by the AI-assisted approach, outperforms the benchmark frameworks and achieves a close-to-optimal performance in energy efficiency.

\end{abstract}

\begin{IEEEkeywords}
RAN slicing, service coverage management, interference management, deep unsupervised learning.
\end{IEEEkeywords}

\section{Introduction}

Since the 3rd Generation Partnership Project (3GPP) Release~18 for the advanced fifth generation communication network (5G-advanced) in 2021, academia have commenced their efforts on the development and deployment of next-generation wireless networks (NGWNs)~\cite{chen2022standardization}. NGWNs are anticipated to support a diverse set of disruptive new services such as extended reality (XR) and haptic communications~\cite{shen2022toward}. As a result, the research and standardization efforts for NGWNs must address new challenges. First, services in NGWNs will have unprecedentedly stringent quality of service (QoS) requirements since a massive amount of data must be transmitted over networks with extremely low delay and ultra-high reliability~\cite{afolabi2018network, wang2023artificial}. Second, the QoS requirements of services in NGWNs will become highly diverse. Meeting the stringent and diverse QoS requirements to support new services in NGWNs calls for advanced networking and communication techniques~\cite{letaief2019roadmap, ma2023nomore}. 

Network slicing, as a key innovation in the fifth generation (5G), can support multiple coexisting virtual networks,~i.e.,~slices, on the same physical network infrastructure~\cite{shen2020ai}. Due to the advantages in QoS guarantee and service differentiation, network slicing lays a foundation for efficient resource management and will continue playing an important role in NGWNs. Some pioneering works have envisioned advanced slicing-based resource management for services in NGWNs with diverse and stringent QoS requirements~\cite{wang2022secure,mei2020intelligent}. In these works, slicing-based resource management can be categorized into two stages, i.e., \emph{planning stage} and \emph{operation stage}. The planning stage focuses on network-wide configuration and proactive network resource reservation for different services, while the operation stage focuses on user-level service provisioning and real-time network resource allocation~\cite{li2021slicing}. A planning period, referred to as the planning window, can be minutes or hours in length, whereas a network operation period, referred to as the operation window, is generally milliseconds in length. While planning and operation stages have different focuses, both play indispensable roles in slicing-based resource managements as they jointly determine QoS satisfaction~\cite{zhuang2021dynamic, fadlullah2022balancing}. However, existing literature and 3GPP standards pay much more attention to the operation stage than to the planning stage.

Compared to the operation stage, slicing-based resource management in the planning stage faces unique challenges. First, real-time information on individual users is unavailable at the beginning of the planning stage when resources are reserved. Consequently, existing slicing-based resource management schemes in the planning stage rely on coarse-grained information such as the aggregated data traffic over a planning window, which may result in an inaccurate estimation of service demands and thus degrade network resource utilization~\cite{zhou2022digital}. Second, user mobility and time-varying user behaviors result in significant spatiotemporal variations in service demands, which pose a challenge of balancing adaptivity and overhead in the planning stage of slicing-based resource management~\cite{shen2021holistic}. Third, differentiating services and satisfying their diverse and stringent QoS requirements further complicate the decision making on network-wide configurations and proactive resource reservation~\cite{shen2020ai}. 

Following 5G standardization in Releases~15 to~17 as well as commercial 5G deployment, a large number of works have studied slicing-based resource management for supporting diverse services in core networks~\cite{afolabi2018network}. Nevertheless, slicing-based resource management for radio access networks (RANs) is still in its infancy~\cite{mei2021intelligent,3GPP43030}. Ensuring service differentiation among multiple slices in RANs is more challenging than in core networks, and the reason is two-fold. First, interference occurs among the data transmissions of different base stations (BSs) within each slice since spectrum reuse takes place among the BSs for improving the spectrum multiplexing gain~\cite{mei2020intelligent}. Such intra-slice interference causes challenges in accurately estimating the required amount of resources for each slice, thereby adversely affecting their QoS satisfaction~\cite{zambianco2020interference}. Furthermore, inter-slice interference may occur and result in tightly coupled management (such as coverage management) among different slices in RANs, which hinders efficient slice isolation in RANs~\cite{foukas2017orion}. Therefore, slicing-based resource management for RANs that can address the aforementioned challenges needs to be further investigated in NGWNs.

In this paper, we investigate slicing-based resource management for a two-tier RAN, i.e.,~a single
macro-cell in the first tier and multiple small cells in the second tier, to improve resource utilization and achieve service differentiation. Specifically, creating a slice for each service, we determine the service coverage (SC) of BSs for each slice and manage inter-slice and intra-slice interference to support slices with different signal-to-interference-plus-noise ratio (SINR) requirements. Our research objective is to maximize the network energy efficiency by determining the SC and downlink transmission power of BSs for all slices while satisfying their SINR requirements. We propose a RAN slicing framework and formulate an optimization problem based on the proposed framework. Then, we develop an approach to solve the problem for obtaining the optimal solution of SC management (SCM) and interference management (IM). The main contributions of this paper are as follows:

\begin{itemize}

    \item We develop a novel RAN slicing framework with three designs for the spatial, temporal, and slice dimensions. The proposed grid-based planning and dual time-scale planning can adapt to spatiotemporal variations in data traffic, and the proposed flexible binary slice zooming can enhance the flexibility of service differentiation for satisfying different QoS requirements in a RAN.

    \item We propose an effective artificial intelligence (AI)-assisted approach to address the challenging RAN slicing problem. By integrating a deep unsupervised learning technique and an optimization-based analytical solution, the proposed approach can cope with the coupling between SCM and IM to balance the adaptivity and overhead of slicing-based resource management.

\end{itemize}

The remainder of this paper is organized as follows.
Section~II provides an overview of related studies.
Section~III describes the network scenario and proposed RAN slicing framework.
Section~IV presents the system model and problem formulation.
Section~V introduces the developed AI-assisted approach.
Section~VI presents the simulation results, followed by the conclusion in Section~VII. 
A list of main symbols is given in Table~I. 

\begin{table*}[t]
    \footnotesize 
    \centering
    \captionsetup{justification=centering,singlelinecheck=false}
    \caption{List of Main Symbols}\label{table1}
    \begin{tabular}{c|l|c|l}
    \hline\hline
    Symbols & \thead{Definition} & Symbols & \thead{Definition} \\
    \hline\hline
    $a_{m,n}$ & \thead[l]{The binary indicator indicating whether\\ the SC of SBS $m$ for slice $n$ is full-size\\ or reduced-size} & $p_{i,n}^{t}$ & \thead[l]{The total downlink transmission power\\ summarized over all RBs within grid $i$ for\\ slice $n$ in time interval $t$}\\
    \hline
    $b_{i,n,i',n'}^{t}$ & \thead[l]{The indicator indicating whether the\\ downlink transmission to grid $i$ for slice $n$\\ is interfered by downlink transmission to\\ grid $i'$ for slice $n'$ in time interval $t$}  & $\theta_{i,n}^{t}$ & \thead[l]{The probability that the downlink transmission\\ to grid $i$ for slice $n$ is interfered by downlink\\ transmission to grid $i'$ for slice $n'$ in time\\ interval $t$}\\
    \hline
    $E_{m,n}^{t}$ & \thead[l]{The energy consumption of slice $n$ at\\ BS $m$ in time interval $t$ } & $P_{m,n}^{t}$ & \thead[l]{The total transmission power over all grids within\\ the SC of BS $m$ for slice $n$ in time interval $t$}  \\
    \hline
    $h_{i,n}^{t}$ & \thead[l]{The average channel gain of downlink\\ transmission of BS $m_{i,n}$ over all UTs\\ within grid $i$ in time interval $t$} & $\mathcal{R}_{m}$ & \thead[l]{The set of grids within the ring-shaped area\\ surrounding SBS~$m$}\\
    \hline
    $\mathcal I_{m,n}$ & \thead[l]{The set of grids within the SC of\\ BS $m$ for slice $n$ } & $w_{i,n}^t$ & \thead[l]{The amount of downlink data traffic loads of\\ all UTs within grid $i$ in slice~$n$ in time interval $t$} \\
    \hline
    $l_{m}^\mathrm{f}$, $l_{m}^\mathrm{r}$ & \thead[l]{The full-size and reduced-size SC\\ radiuses of SBS $m$, respectively } & $\gamma_{i,n}^{t}$ & \thead[l]{The SINR of downlink transmission within\\ grid $i$ for slice $n$ in time interval $t$}\\
    \hline
    $l_{m,n}$ & \thead[l]{The SC of SBS $m$ for slice $n$} & $m_{i,n}$ & \thead[l]{The index of BS associated to grid $i$\\ for slice $n$}  \\
    \hline
    $L_\mathrm{max}$ & \thead[l]{The maximum physical coverage\\ radius of each SBS} & $\Upsilon^\text{re}$ & \thead[l]{The set of data records used for\\ solution refinement}\\
    \hline
    $r$ & \thead[l]{The diameter of each grid}  & $\Upsilon$ & \thead[l]{The set of data records}\\

    \hline

    \hline\hline

    \end{tabular}
\end{table*}

\section{Related Work}

Slicing-based resource management for core networks has attracted significant attention since 5G due to its advantage in service differentiation, while research on slicing-based resource management for RANs is still at a nascent stage~\cite{afolabi2018network}. Existing works on slicing-based resource management for RANs can be categorized as either a \emph{single-stage} approach or a \emph{two-stage} approach (i.e., with planning and operation stages as mentioned in Section I).

In single-stage approaches, a centralized controller, e.g., a software-defined networking (SDN) controller, is responsible for managing resources in a RAN for each individual user terminal (UT) in each slice~\cite{korrai2020ran,caballero2017multi,yang2020mixed,sun2020service,oladejo2020latency,xiang2020realization,zambianco2020interference}. In a single-BS scenario, Korrai~\emph{et. al} focused on the physical-layer RAN slicing and investigated customized physical-layer configurations for UTs in different slices~\cite{korrai2020ran}, while Yang~\emph{et. al} concentrated on the data link layer and proposed a resource block (RB) scheduling scheme for UTs of enhanced mobile broadband (eMBB) and ultra-reliable and low latency communications (URLLC) slices to satisfy their different latency and reliability requirements~\cite{yang2020mixed}. In a multiple-BS scenario, authors in~\cite{caballero2017multi} proposed an orthogonal RB allocation scheme for UTs in different slices from the perspective of fairness in data rates of UTs. Moreover, with the consideration of inter-slice and intra-slice interference, a few works presented RB allocation schemes for UTs in different slices to improve their performance in terms of latency, data rate, and RB usage~\cite{sun2020service,oladejo2020latency,xiang2020realization,zambianco2020interference}. While single-stage approaches can support service differentiation, their adaptivity is restricted owing to the lack of proactive resource reservation, which poses a challenge to QoS guarantee in highly dynamic network environments~\cite{filali2022dynamic}.

To tackle this problem, lots of researchers recently concentrate on two-stage approaches~\cite{shen2020ai,mei2020intelligent}. Specifically, a centralized controller proactively reserves network resources for slices according to the service demand of each slice in a large time scale, i.e., planning window, whereas each slice allocates the reserved resources to individual UTs based on their real-time status in a short time scale, i.e., operation window. Compared with one-stage approaches, two-stage approaches are capable of achieving high adaptivity by proactively configuring slices and reserving resources in dynamic network environments and offer great flexibility due to having two time scales for different resource management decisions~\cite{shen2020ai,adamuz2022stochastic}. Focusing on the planning stage, a few existing works investigated proactive resource reservation in RAN slicing by statistically modeling the service demand of each slice~\cite{wu2020dynamic,jankovic2021effects,adamuz2022stochastic,d2019slice}, e.g., Poisson process-based data packet arrival. Considering vehicular networks with eMBB, URLLC, and massive machine-type communication (mMTC) services, the authors of~\cite{wu2020dynamic} and~\cite{jankovic2021effects} proposed two orthogonal radio resource reservation schemes, respectively. Taking into account inter-slice interference, some researchers presented spectrum slicing schemes, e.g.,~\cite{adamuz2022stochastic,d2019slice}, and the authors of~\cite{pablo2020radio} analyzed the trade-off between spectrum utilization and inter-slice interference. In addition, joint planning-stage and operation-stage radio resource slicing was studied in various network scenarios, including one-tier~\cite{bakri2021data,yan2019intelligent}, two-tier~\cite{ye2018dynamic}, and drone-based RANs~\cite{shen2021drone}, where radio resource reservation among slices in the planning stage was conducted based on AI-driven prediction~\cite{bakri2021data,yan2019intelligent} or statistical modeling~\cite{ye2018dynamic,shen2021drone} of the data traffic load in each slice. Existing research on two-stage approaches mainly concentrated on resource reservation for RANs, while SC management for multiple slices with different QoS requirements remains an open issue. Moreover, the existing two-stage approaches rely on coarse-grained information, such as aggregated data traffic within the SC area of a BS, which may degrade network resource utilization.

Different from the existing two-stage approaches, we propose a novel RAN slicing framework for both resource reservation and SC management in the planning stage. With joint resource reservation and SC management, we target fine-grained and flexible resource management for achieving service differentiation in spatiotemporally dynamic network environments.

\section{Network Scenario and RAN Slicing Framework}

In this section, we introduce the considered network scenario and present the proposed RAN slicing framework.

\subsection{Network Scenario}

Consider a two-tier RAN with one macro BS (MBS) in the first tier and~$M$ small BSs (SBSs) in the second tier. We show the physical network scenario of the considered two-tier RAN in Fig.~\ref{system}. All the BSs use the same radio spectrum pool, and each BS orthogonally reserves RBs for downlink transmissions within its coverage area~\cite{3GPP36872}. Using network slicing, $N$ slices (corresponding to~$N$ services with different SINR requirements) are created on top of the physical network, and the radio spectrum resource of each BS is shared by all slices. For each slice, the MBS and all SBSs jointly support the corresponding service across the network to ensure that the service is accessible anywhere within the network coverage area. Meanwhile, given any slice, the SC of different SBSs (representing the spatial coverage of these SBSs for the corresponding service) are non-overlapping with each other for mitigating intra-slice interference. Any UT within the SC of an BS is associated to that BS, and each BS solely serves all UTs within its SC for the corresponding service. UTs not within the SC of any SBS are associated to the MBS. A centralized controller located at the MBS determines the SC and total transmission power for each slice at each BS in the planning stage, corresponding to SCM and IM. Then, in the subsequent operation stage, each BS allocates radio resources, such as RBs and transmission power, to individual UTs within its SC for downlink transmissions.

    \begin{figure}[t]
        \centering
        \includegraphics[width=0.40\textwidth]{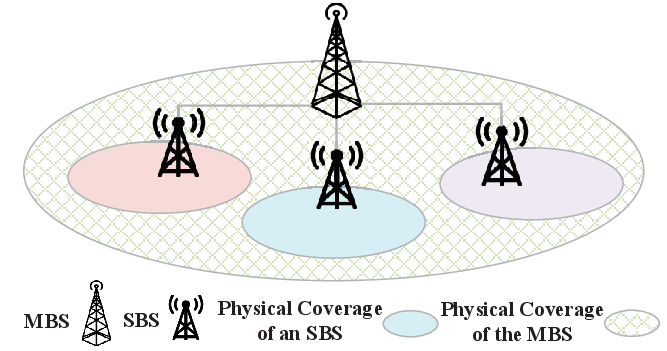}
        \caption{The physical network scenario.}\label{system}
    \end{figure}

\subsection{RAN Slicing Framework}

For the considered scenario, we focus on the planning stage and propose a RAN slicing framework to achieve fine-grained and flexible SCM and IM for services with different SINR requirements. The proposed framework consists of three schemes: 1) grid-based planning in the spatial dimension; 2) dual-time scale planning in the temporal dimension; and 3) flexible binary slice zooming in the slice dimension. 

\subsubsection{Grid-based Planning} 

    \begin{figure}[t]
      \centering
        \subfigure[An illustration of grids and SC.]
        {\includegraphics[width=0.45\textwidth]{./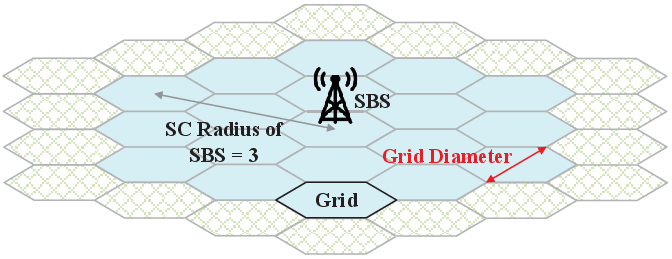}}
        \subfigure[An illustration of the time scales.]
        {\includegraphics[width=0.49\textwidth]{./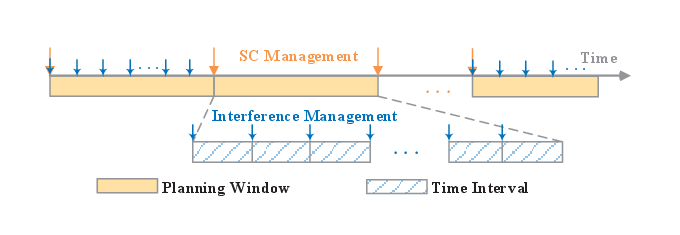}} \\
      \caption{Grid-based and dual time-scale planning (shown for one slice).}
        \label{f:grid}
    \end{figure} 

To cope with the uneven spatial distribution of data traffic loads, we propose grid-based planning in the spatial dimension, where the illustration of grid-based planning for one slice is shown in Fig.~\ref{f:grid}(a). Specifically, the whole network coverage is divided into $I$ hexagon areas, named \emph{grids}, with an identical grid diameter, denoted by~$r$.\footnote{In addition to hexagons, some other shapes of grids are also applicable to the proposed grid-based planning.} We assume that each BS is at the center of a grid, and the SC radius of each SBS corresponds to the number of layers of grids within its SC. For each slice, the SC of the MBS includes all the grids that are not in the SC of any SBS. In the example shown in Fig.~\ref{f:grid}(a), the SC radius of the SBS is~$3$. The total downlink transmission power for each grid within the SC of a BS can be different.

The benefit of grid-based planning is two-fold. First, the downlink transmissions for UTs within different grids may experience different interference. Customizing the total transmission power for each grid can help mitigate inter-slice and intra-slice interference and thus improve network energy efficiency. Second, adjusting the SC of each SBS in the units of grids is beneficial for balancing data traffic loads among BSs in a fine-grained manner.

\subsubsection{Dual Time-scale Planning}

To adapt to the temporal variations of data traffic loads, we propose the scheme of dual time-scale planning, where the illustration of dual time-scale planning for one slice is shown in Fig.~\ref{f:grid}(b). Each planning window is divided into $T$ ($T>1$) \emph{time intervals} with uniform length. The SC of the SBSs is updated at the beginning of each planning window and remains constant till the beginning of the next planning window. By contrast, the total downlink transmission power of the BSs for individual grids is updated at the beginning of each time interval in the planning window if needed. 

Dual time-scale planning provides great flexibility in differentiating the time scales of SCM and IM based on the difference in the amount of resource management overhead, i.e., signaling overhead and computation complexity. First, the short planning window for SCM leads to frequent UT association changing during network operations and thus high signaling overhead. Second, SCM has a higher computation complexity than IM since adjusting the SC of BSs results in the update of total transmission power of BSs for individual grids. Dual-time scale planning helps properly balance the adaptivity of resource management and the resource management overhead.

\subsubsection{Flexible Binary Slice Zooming}

\begin{figure}[t]
    \centering
    \includegraphics[width=0.48\textwidth]{./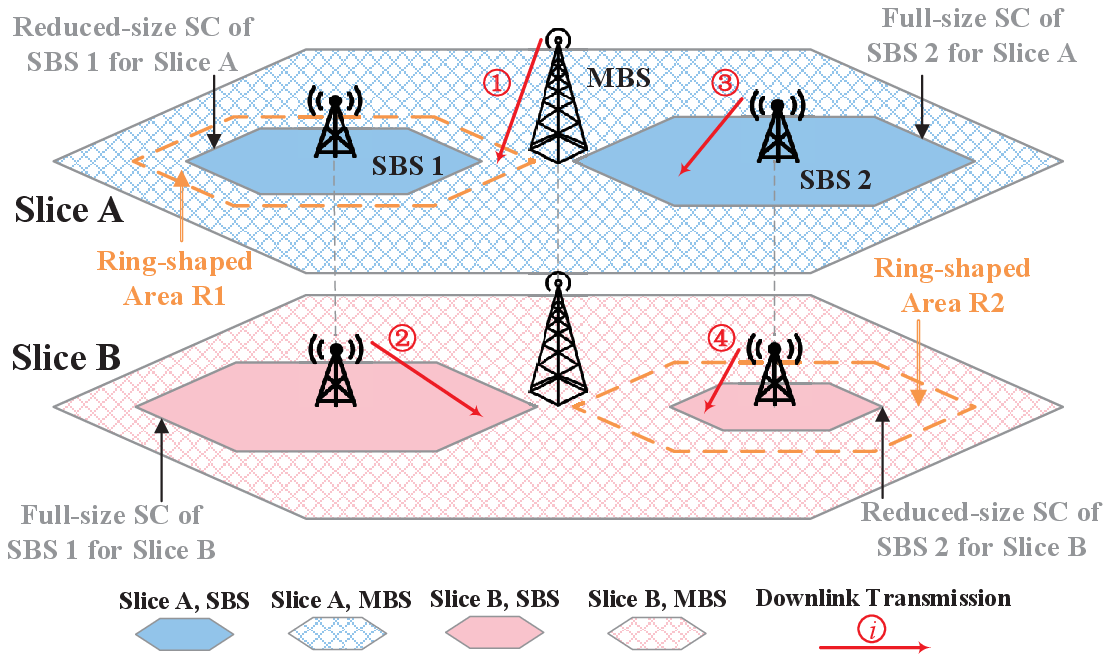}
    \caption{Flexible binary slice zooming (shown for two slices).}\label{f:FBSL}
\end{figure}

To provide the flexibility in service differentiation, we propose a novel scheme called \emph{flexible binary slice zooming} in the slice dimension, including the following two elements. The proposed scheme for two slices is illustrated in Fig.~\ref{f:FBSL}.
    \begin{itemize}
        \item \textbf{Differentiated IM and SCM across slices:} For IM, the transmission power reserved by each BS for each grid can be different across slices. For SCM, the SC of each SBS can also be different across slices. Specifically, the SC of each SBS for any slice is binary, i.e., either full-size or reduced-size shown in Fig.~\ref{f:FBSL}, neither of which can exceed the maximum physical coverage area of the SBS. We refer to the gap between the full-size and the reduced-size SC of each SBS as a \emph{ring-shaped area} surrounding the SBS.\footnote{All SC of an SBS may be identical, i.e., either all SC is reduced-size or all SC is full-size. In this case, there is no ring-shaped area surrounding the SBS.}

        \item \textbf{Partially non-orthogonal RB reservation among BSs:} All the BSs use the same radio spectrum pool except in the following case: each SBS and the MBS reserve different sets of RBs for downlink transmissions within the ring-shaped area surrounding the SBS if such an area exists. The partially non-orthogonal RB reservation avoids the interference between the downlink transmissions of each SBS and the MBS within the ring-shaped areas. We highlight the partially non-orthogonal RB reservation for downlink transmissions using red arrows in Fig.~\ref{f:FBSL}, which is explained in Subsection~IV.A. 

    \end{itemize} 

The proposed flexible binary slice zooming has two benefits in facilitating service differentiation in RANs. First, differentiating the downlink transmission power for different slices achieves fine-grained IM in the slice dimension, and thus helps satisfy the diverse and stringent SINR requirements of slices. Second, by customizing the SC of each BS for different slices, the proposed flexible binary slice zooming scheme is more flexible in adapting to the different spatial distributions of data traffic loads than conventional cell-based SCM that uses identical SC for all services.

Based on the aforementioned three schemes, the proposed RAN slicing framework provides great flexibility in enabling isolated SCM and IM for multiple slices, and improves granularity and adaptivity in adapting to the spatiotemporal variations of data traffic loads in RANs.

\subsection{Operation Stage Consideration}

    \begin{figure}[t]
        \centering
        \subfigure[Virtual slice separation in the planning stage.]
        {\includegraphics[width=0.24\textwidth]{./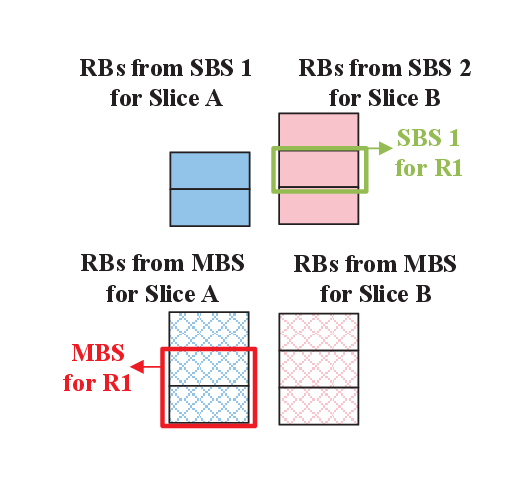}} 
        \subfigure[RB allocation in the operation stage.]
        {\includegraphics[width=0.24\textwidth]{./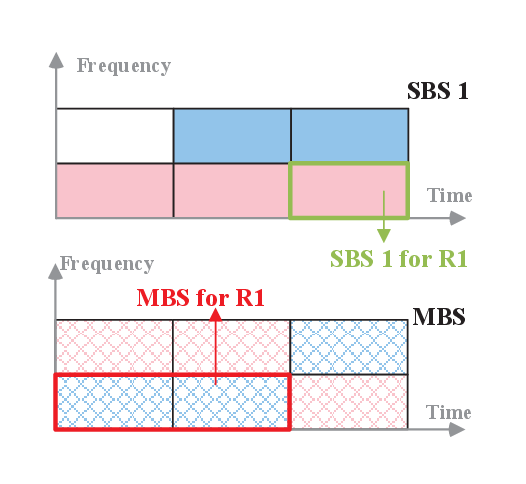}}
        \caption{Virtual slice separation at SBS~1.}
        \label{f:zooming-virtual}
    \end{figure}

The real-time allocation of RBs for individual UTs in the operation stage impacts the interference and thus the SINR of each UT. As a result, making decisions on SCM and IM with the consideration of operation-stage RB allocation is necessary. However, it is impossible to know the future UT-level information, e.g., the locations and data traffic loads of UTs, at the beginning of each planning window, and how RBs will be allocated to individual UTs in the subsequent planning window. To overcome this issue, we adopt \emph{virtual slice separation} to reserve RBs for multiple slices in the planning stage with RB multiplexing in the operation stage. Specifically, only the number of RBs reserved for each slice is determined in the planning stage rather than the specific set of RBs. An example of virtual slice separation is shown in Fig.~\ref{f:zooming-virtual}. Fig.~\ref{f:zooming-virtual}(a) shows the numbers of RBs reserved to slices~A and~B by virtual slice separation, and Fig.~\ref{f:zooming-virtual}(b) shows the specific sets of RBs that can be flexibly allocated to individual UTs in the operation stage based on the real-time network environment.

\section{System Model and Problem Formulation} 

In this section, we first present the system model for SCM and IM based on the proposed RAN slicing framework. Then, we formulate an optimization problem to maximize the network energy efficiency. 

Denote the set of BSs by $\mathcal M = \left\{0,1,\cdots, M \right\}$, and let~$m = 0$ and~$m \in \mathcal{M}\backslash\{0\}$ be the indexes of the MBS and~$M$ SBSs, respectively. Define the sets of slices, grids, and time intervals as $\mathcal N = \left\{1,2,\cdots, N \right\}$, $\mathcal{I} = \left\{1,2, \cdots, I \right\}$, and $\mathcal T$ = $\{1, 2, \cdots, T \}$, respectively.

\subsection{Model of SCM}

We model the SC of BSs in the proposed RAN slicing framework. Denote the SC radius of SBS~$m$ for slice~$n$ by $l_{m,n}$. We assume that the maximum physical coverage radius of all SBSs are identical, denoted by $L_\mathrm{max}$, and define the set of possible SC radius for any slice as $\mathcal{L} = \{1,2, \cdots, L_\mathrm{max}\}$. With flexible binary slice zooming, we determine the full-size or reduced-size SC radiuses of SBS~$m$, denoted by $l_{m}^\mathrm{f} \in \mathcal{L}$ and $l_{m}^\mathrm{r} \in \mathcal{L}$, respectively, where  $l_{m}^\mathrm{f} \geq l_{m}^\mathrm{r}$. To indicate whether the SC of SBS $m$ for slice $n$ is full-size or reduced-size, we introduce a binary variable~$a_{m,n} \in \{0, 1\}$.  Accordingly, $\mathbf{l}^\mathrm{f}  = [l_{m}^\mathrm{f}]_{\forall m \in \mathcal M \backslash \{ 0 \}}$, $\mathbf{l}^\mathrm{r}  = [l_{m}^\mathrm{r}]_{\forall m \in \mathcal M \backslash \{ 0 \}}$, and $\mathbf{a}  = [a_{m,n}]_{\forall m \in \mathcal{M} \backslash \{ 0 \}, n \in \mathcal{N}}$ are the variables that determine the SC of BSs during a planning window. The SC radius of SBS~$m \in \mathcal{M}\backslash\{0\}$ for slice~$n \in \mathcal{N}$ is represented as follows:
	\begin{equation}\label{eq1}
		l_{m,n} =\left\{
		\begin{aligned}
		&l_{m}^\mathrm{f}, \,\, \text{if} \,\, a_{m,n} = 1;\\
		&l_{m}^\mathrm{r}, \,\, \text{if} \,\, a_{m,n} = 0.
		\end{aligned}
		\right.
	\end{equation}
 Let $d_{m,i}$ denote the distance between BS~$m$ and the center of grid~$i$. We define the set of grids within the SC of SBS~$m$ for slice~$n$ and the set of grids within the ring-shaped area surrounding SBS~$m$ as~$\mathcal{I}_{m,n} = \{i| d_{m,i} \leq l_{m,n}, i \in \mathcal{I} \}$ and~$\mathcal{R}_{m} = \{i | l_{m}^\mathrm{r} \leq d_{m,i} \leq l_{m}^\mathrm{f}, i \in \mathcal{I} \}$, respectively. We define the set of grids within the SC of the MBS for slice~$n$ as $\mathcal{I}_{0,n} = \{i | i \in \mathcal{I} \backslash \mathcal{I}_{m,n}, m \in \mathcal{M} \backslash \{0\} \}$, i.e., for any slice, grids that are not within the SC of any SBS are covered by the MBS.

SCM should consider the spatial distribution of downlink data traffic loads. Denote the amount of downlink data traffic loads (in bits) of all UTs within grid~$i$ in time interval~$t$ for slice~$n$ by~$w_{i,n}^{t}$. Let vectors~$\mathbf{w}_{n}^{t} = [w_{i,n}^{t}]_{\forall i \in \mathcal{I}}$ and~$\mathbf{W} = [w_{i,n}^{t} ]_{\forall i \in \mathcal{I}, n \in \mathcal{N}, t \in \mathcal{T}}$ be the data traffic distribution (DTD) of slice~$n$ in time interval~$t$ and the DTD vector of all slices over a planning window, respectively. The DTD vector~$\mathbf{W}$ is assumed to be known~\emph{a prior} through prediction~\cite{osseiran20165g}. The required number of RBs for each BS depends on the data traffic load within the SC of the BS. Let $\eta_{n}$ represent the average number of RBs required to support each bit of the downlink data traffic in slice~$n$\footnote{The value of $\eta_{n}$ can be estimated according to the data rate requirement of slice~$n$ and the long-term performance of RB scheduling in the operation stage~\cite{yang2008proactive}.} To ensure that the number of RBs reserved for the downlink data traffic within the SC of any BS does not exceed the total number of RBs of each BS during a planning window, denoted by $C$, the condition
    \begin{equation}\label{eq2p}
        \sum_{t \in \mathcal{T}}{\sum_{n \in \mathcal{N}}{ \sum_{i \in \mathcal{I}_{m,n}}{ \eta_{n}  w_{i,n}^{t} } }}  \leq C, \,\, \forall m \in \mathcal{M}\backslash\{0\},
    \end{equation}
for each SBS and the condition
    \begin{equation}\label{eq3p}
       \sum_{t \in \mathcal{T}}{\sum_{m \in \mathcal{M}\backslash\{0\}}{\sum_{n\in\mathcal{N}}{ \eta_{n} (\sum_{i \in \mathcal{I}}{w_{i,n}^{t} } - \sum_{i \in \mathcal{I}_{m,n}}{   w_{i,n}^{t}  } )} } } \leq C,
    \end{equation}
for the MBS should be satisfied in SCM.

SCM affects the interference among downlink transmissions of different BSs due to frequency reuse. As the result of partially non-orthogonal RB reservation in flexible binary slice zooming, there are two cases in which downlink transmissions do not interfere with each other: i) between downlink transmissions of the same BS, e.g., communication links~3 and~4 in Fig.~\ref{f:FBSL} (shown as the red arrows with circled numbers~3 and~4 in the figure); and ii) between downlink transmissions of the SBSs and the MBS within a ring-shaped area, e.g., communication links~1 and~2 for the two UTs in the ring-shaped area~R1 in Fig.~\ref{f:FBSL} (shown as the red arrows with circled numbers~1 and~2). To achieve non-orthogonal RB reservation as mentioned in Subsection~III.B, the following condition must be satisfied in SCM: 

    \begin{equation}\label{eq2}
          \sum_{t \in \mathcal{T}}{\sum_{n \in \mathcal N}{ \sum_{i \in \mathcal{R}_{m}}{ w_{i,n}^{t} \eta_n } }} \leq C, \,\,  \forall m \in \mathcal{M}\backslash\{0\}.
    \end{equation}

Constraint~\eqref{eq2} ensures that, if a ring-shaped area surrounding an SBS exists, the SBS and the MBS have sufficient RBs for orthogonal RB reservation in the ring-shaped area. 

Other than the aforementioned two cases, there exists the interference between downlink transmissions of different BSs. We introduce term $b_{i,n,i',n'}^{t} \in \{0, 1 \}$ to indicate whether the downlink transmission to grid~$i$ for slice~$n$ is interfered by the downlink transmission to grid~$i'$ for slice~$n'$ in time interval~$t$ or not, given by:
    \begin{equation}\label{eq3}
        \begin{aligned}
            & b_{i,n,i',n'}^{t} = \\
            & \left\{ \begin{array}{l}             
                0,\,\, \text{if}\,\,  i \in \mathcal{I}_{m,n}, i' \in \mathcal{R}_{m}, a_{m,n} =1,  a_{m,n'}=0;\\
                0,\,\, \text{if}\,\,  i \in \mathcal{R}_{m}, i' \in \mathcal{I}_{m,n'}, a_{m,n} =0,  a_{m,n'}=1;\\
                0,\,\, \text{if}\,\, m_{i,n} = m_{i', n'}, \forall  m_{i,n}, m_{i', n'} \in \mathcal{M};\\
                1,\,\, \text{otherwise},
            \end{array} \right.
        \end{aligned}
    \end{equation}
where $m \in \mathcal{M}\backslash\{0\}$, and $m_{i,n} = \{ m | i \in \mathcal{I}_{m,n}, m \in \mathcal{M} \}$ denotes the BS that covers grid~$i$ in its SC of slice~$n$. The first and second cases in~\eqref{eq3} represent no interference between the downlink transmission of the SBSs and the MBS within the ring-shaped areas. The third case in~\eqref{eq3} represents no interference between the downlink transmissions within grid~$i$ for slice~$n$ and that within gird~$i'$ for slice~$n'$ if they are from the same BS, i.e., $m_{i,n} = m_{i', n'}$. Otherwise, the downlink transmission within a grid interferes with the downlink transmission within other grids.     

\subsection{Model of IM}

We model IM based on virtual slice separation mentioned in Subsection~III.C. Denote the total downlink transmission power summarized over all RBs reserved for downlink transmissions in grid~$i$ for slice~$n$ in time interval~$t$ by~$p_{i,n}^{t}$. Define the IM decision in time interval~$t$ and in a planning window as $\mathbf{p}^t  = [p_{i,n}^{t}]_{\forall i \in \mathcal{I}, n \in \mathcal{N}}$ and $\mathbf{p} = [p_{i,n}^{t}]_{\forall i \in \mathcal{I}, n \in \mathcal{N}, t \in \mathcal{T}}$, respectively. 

We assume that the maximum downlink transmission power of SBSs are the same, and denote the maximum downlink transmission power of the SBSs and the MBS by~$p_\mathrm{SBS}$ and~$p_\mathrm{MBS}$, respectively. The following constraint should be satisfied in IM to ensure that the total downlink transmission power of each BS over all slices cannot exceed the maximum downlink transmission power of the BS:
    \begin{equation}\label{eq6}
        \sum_{n\in \mathcal{N}}{\sum_{i \in \mathcal I_{m,n}}{ p_{i,n}^{t} }} \leq \left\{
        \begin{aligned}
        &p_\mathrm{MBS}, &&\;  \,\,\, m = 0;\\
        &p_\mathrm{SBS}, &&\;  \,\,\, m \in \mathcal{M} \backslash \{0\}.
        \end{aligned}
        \right.
    \end{equation} 

Next, we model the interference between downlink transmissions of different BSs. The exact interference depends on real-time RB scheduling during the operation stage and is unknown~\emph{a priori} in the planning stage. We define parameter~$\theta_{i,n,i',n'}^{t} \in [0,1]$ to represent the likeliness that the downlink transmission to grid~$i$ for slice~$n$ is interfered by the downlink transmission to grid~$i'$ for slice~$n'$ in time interval~$t$ and model the planning-stage interference statistically~\cite{3GPP363,pablo2020radio}.\footnote{The value of parameter~$\theta_{i,n,i',n'}^{t}$ can be obtained empirically when the DTD~$\mathbf{W}$, RB scheduling policy in the operation stage, and RB reservation policy in the planning stage are given~\cite{adamuz2022stochastic}.} The SCM decisions of the BSs covering grid~$i$ and $i'$ for slice~$n$ and $n'$ can affect the data traffic loads of the BSs and thus the value of~$\theta_{i,n,i',n'}^{t}$. Given~$\theta_{i,n,i',n'}^{t}$, the total interference to the downlink transmission of BS~$m_{i,n}$ to grid~$i$, denoted by~$I_{i,n}^{t}$, is expressed as follows:
    \begin{equation}\label{}
        I_{i,n}^{t} = \sum_{n' \in \mathcal N}{ \sum_{i' \in \mathcal{I} }{b_{i,n,i',n'}^{t} \theta_{i,n,i',n'}^{t} p_{i',n'}^{t}  h_{i, n,i',n'}^{t}} }.
    \end{equation}
where $h_{i,n,i',n'}^{t}$ denotes the average channel gain of the downlink transmission of BS~$m_{i',n'}$ to grid~$i$ for slice~$n$ in time interval~$t$. 

The SINR of the downlink transmission of BS~$m_{i,n}$ to grid~$i$ for slice~$n$ in time interval~$t$, denoted by $\gamma_{i,n}^{t}$, can be modeled as follows:
    \begin{equation}\label{eq7}
        \gamma_{i,n}^{t} = \frac{ \bar{p}_{i,n}^{t} h_{i,n,i,n}^{t} }{N_{0} + I_{i,n}^{t} },\,\,\,\, \forall i \in \mathcal I, n \in \mathcal N, t \in \mathcal{T},
    \end{equation}
where $\bar{p}_{i,n}^{t} = p_{i,n}^{t} / (w_{i,n}^{t} \eta_{n} )$ represents the average transmission power on a single RB for downlink transmissions in slice~$n$ within grid~$i$ in time interval~$t$, and $N_{0}$ denotes the noise power. IM should satisfy the SINR requirement of each slice, as follows:
    \begin{equation}\label{eq9}
        \gamma_{i,n}^{t} \geq \rho  \gamma_n^\mathrm{min}, \;\;\;\;  \forall i \in \mathcal I_{m,n},
    \end{equation}
where $\gamma_n^\mathrm{min}$ denotes the minimum SINR required by slice~$n$, and~$\rho$ is a constant used for flexibly scaling the minimum required SINR level~\cite{3GPP43030}.\footnote{The SINR in the planning stage, i.e., $\gamma_{i,n}^{t}$ is a reference value over the duration of a time interval, which may not represent the exact SINR level in the operation stage. Thus, we allow a feedback mechanism to change the SINR requirements of slices in the planning stage by adjusting weight $\rho$ based on the real-time power control, RB allocation, and instantaneous SINR in the operation stage.}

\subsection{Problem Formulation}

In this subsection, we formulate an energy efficiency maximization problem based on the proposed RAN slicing framework. Denote the energy consumption of BS~$m$ for serving slice~$n$ in time interval~$t$ by~$E_{m,n}^{t}$, given by:
    \begin{equation}\label{eq8}
        E_{m,n}^{t} = \tau P_{m,n}^{t}, \;\;\;\; \forall m \in \mathcal M, n \in \mathcal N, t \in \mathcal T,
    \end{equation}
where $\tau$ denotes the duration of each time interval. The energy efficiency (measured in the unit of bit/RB/J) of all BSs for serving slice~$n$ during a planning window, denoted by~$\xi_{n}$, is as follows: 
    \begin{equation}
        \xi_{n}  = \frac{w_{n}}{E_{n} C_{n}}, \,\, \forall n \in \mathcal{N},
    \end{equation}
where $E_{n} = \sum_{ t \in \mathcal T}{\sum_{ m \in \mathcal M}{E_{m,n}^{t} }}$ is the total energy consumption of all BSs in all time intervals of a planning window, $w_{n} = \sum_{ t \in \mathcal T}{\sum_{ m \in \mathcal M}{ \sum_{ i \in \mathcal{I}_{m,n}}{w_{i,n}^{t}} }}$ represents the total downlink traffic data loads in the planning window, and $C_{n} = \sum_{ t \in \mathcal T}{\sum_{ m \in \mathcal M}{\sum_{ i \in \mathcal{I}_{m,n}}{w_{i,n}^{t} \eta_{n} }}}$ is the total number of RBs reserved in the planning window.

The slicing-based resource management problem with the objective of network energy efficiency maximization is formulated as follows:
    \begin{subequations}\label{p1}
        \begin{align}
            \textrm{P1:} \,\, & \max_{ \{ \mathbf{p}, \mathbf{l}^\mathrm{f}, \mathbf{l}^\mathrm{r}, \mathbf{a} \} } \sum_{ n \in \mathcal N}{ \lambda_{n} \xi_{n} }\\
            \textrm{s.t.} & \,\, ,\eqref{eq2p}, \eqref{eq3p}, \eqref{eq2}, \eqref{eq6}, \eqref{eq9}, \\
            & \,\, D_{m, m'} \geq l_m^\mathrm{f} + l_{m'}^\mathrm{f}, \;\; \forall  m \neq m', \,\, m, m' \in \mathcal M \backslash \{ 0 \},\\          
            & \,\, p_{i,n}^{t} > 0 , \;\; \forall p_{i,n}^{t} \in \mathbb R,\\
            & \,\,l_m^\mathrm{f} \geq l_m^\mathrm{r}, \;\; \forall l_m^\mathrm{r}, l_m^\mathrm{f} \in \mathcal{L}, m \in \mathcal{M} \backslash\{ 0 \},\\
            & \,\, a_{m,n} \in \left\{0,1 \right\}, \;\;  \forall m \in \mathcal M,  n \in \mathcal N,
        \end{align}
    \end{subequations}
where $\lambda_{n}$ denotes the weight for balancing the energy efficiency for different slices. In Problem~P1, the optimization variables include IM decision $\mathbf{p}$ and SCM decisions $\mathbf{l}^\mathrm{f}$, $\mathbf{l}^\mathrm{r}$, and $\mathbf{a}$. Constraint~(\ref{p1}c) ensures that the SC of SBSs does not overlap, in which term~$D_{m, m'}$ denotes the physical distance between SBSs~$m$ and~$m'$. Constraint~(\ref{p1}d) guarantees that the downlink transmission power is positive. Constraints~(\ref{p1}e) and (\ref{p1}f) ensure that the selection of the SC of each SBS for each slice is binary and does not exceed the maximum physical coverage of the SBS. Problem~P1 is a combinatorial optimization problem, which is difficult to solve by conventional optimization methods due to two reasons~\cite{korte2011combinatorial}. First, a large number of variables need to be determined. Specifically, the variables for transmission power and SCM are with the dimensions of $N \times I \times T$ and $N \times M$, respectively. Second, the transmission power and SCM decisions are coupled. To solve this problem, we propose an unsupervised learning-assisted solution in the next section.

\section{Unsupervised-learning-assisted Solution}

We decouple Problem~P1 into two sub-problems and solve them in two steps. In the first step, we design an unsupervised learning-assisted approach to determine the SC of the SBSs. In the second step, given a solution to the SCM sub-problem, we derive the closed-form solution to the IM sub-problem in each time interval. We first discuss the solution to IM in Subsection~V.A, followed by the solution to SCM in Subsections~V.B and V.C.

\subsection{Optimal Solution of IM} 

Given the settings of the SC of all SBSs, i.e., $\mathbf{l}_\mathrm{f}, \mathbf{l}_\mathrm{r}, \mathbf{a}$, we formulate the problem of IM in time interval~$t$ as follows:
    \begin{subequations}\label{p2}
        \begin{align}
            \textrm{P2:} \,\, & \max_{ \{ \mathbf{p}^t \} }  \sum_{ n \in \mathcal N}{ \lambda_{n} \xi_{n} }\\
            \textrm{s.t.}& \,\, (\ref{eq6}), (\ref{eq9}), (\ref{p1}\text{d}).
        \end{align}
    \end{subequations}  
The solution of $\mathbf{p}^t$ in Problem~P2 depends on the DTDs of all slices in time interval~$t$. In Theorem~\ref{theorem1}, we provide the closed-form optimal solution of~$\mathbf{p}^t$ in time interval~$t$. Theorem~\ref{theorem1} can be applied to all time intervals of a planning window since IM in different time intervals is independent.  

    \begin{figure*}[b]
        \rule[-10pt]{18.15cm}{0.05em}   
        \begin{equation}\label{eq12}
            \begin{split}
                & \mathbf{p}^t_{*} = \rho \left(    \hat{\mathbf{H}}^t - \rho \left[ \begin{matrix}
                    \gamma _{1}^{\min} \mathbf{\Omega }_{1,1}^{t} &        ...&        \gamma _{1}^{\min}  \mathbf{\Omega }_{1,n'}^{t} &        ...&        \gamma _{1}^{\min}\mathbf{\Omega }_{1,N}^{t} \\
                    \vdots&        \ddots&        ...&        \ddots&         \vdots\\
                    \gamma _{n}^{\min}\mathbf{\Omega }_{n,1}^{t} &        ...&        \gamma _{n}^{\min}\mathbf{\Omega }_{n,n'}^{t} &        ...&        \gamma _{n}^{\min}\mathbf{\Omega }_{n,N}^{t} \\
                    \vdots&        \ddots&        ...&        \ddots&        \vdots\\
                    \gamma _{N}^{\min}\mathbf{\Omega }_{N,1}^{t} &        ...&        \gamma _{N}^{\min}\mathbf{\Omega }_{N,n'}^{t} &        ...&        \gamma _{N}^{\min}\mathbf{\Omega}_{N,N}^{t} \\
                \end{matrix} \right]_{IN \times IN} \right)^{-1} \left[ \begin{matrix}
                    \gamma_{1}^{\min} N_0\\
                    \vdots\\
                    \gamma_{n}^{\min} N_0\\
                    \vdots\\
                    \gamma_{N}^{\min} N_0
                    \end{matrix} \right]_{IN \times 1},
            \end{split}
        \end{equation}
    \end{figure*}

    \begin{theorem}\label{theorem1}
        Define $\delta_{i,n,i',n'}^{t} = b_{i,n,i',n'}^{t} \theta_{i,n,i',n'}^{t} h_{i, n,i',n'}^{t}$. The optimal solution to Problem~P2, i.e., $\mathbf{p}^t_{*}$, is given by~\eqref{eq12}, where
        	\begin{equation}\label{eq13}
                \begin{aligned}
                    & \mathbf{\Omega}_{n,n'}^{t} = \left[\begin{matrix}
                            \delta_{1,n,1,n'}^{t} &      \cdot&     \delta_{1,n,i',n'}^{t} &        \cdot&     \delta_{1,n,I,n'}^{t} \\
                            \vdots&     &     \vdots&     &       \vdots\\
                            \delta_{i,n,1,n'}^{t} &      \cdot&     \delta_{i,n,i',n'}^{t} &        \cdot&     \delta_{i,n,I,n'}^{t} \\
                            \vdots&     &       \vdots&     &     \vdots\\
                            \delta_{I,n,1,n'}^{t} &      \cdot&     \delta_{I,n,i',n'}^{t} &        \cdot&     \delta_{I,n,I,n'}^{t} \\
                        \end{matrix}\right]_{I \times I},
                \end{aligned}
            \end{equation}
        and 
    \begin{equation}\label{eq18p}
        \hat{\mathbf{H}}^t = \mathrm{diag}\left(\frac{h_{1,1}^{t}}{w_{1,1}^{t} \eta_{1}}, \cdots, \frac{h_{i,n}^{t}}{w_{i,n}^{t} \eta_{n}}, \cdots, \frac{h_{I,N}^{t}}{{w_{I,N}^{t} \eta_{N}}}\right).
    \end{equation}
    \end{theorem}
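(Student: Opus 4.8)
The plan is to reduce the energy-efficiency maximization in Problem~P2 to a component-wise power minimization, to recognize the feasible set cut out by \eqref{eq9} as the super-level set of an affine standard interference mapping, and then to identify its unique minimal fixed point with \eqref{eq12}. First I would note that in the objective $\sum_{n}\lambda_n\xi_n$ the quantities $w_n$ and $C_n$ depend only on the DTD~$\mathbf{W}$ and the already-fixed SCM decision, not on $\mathbf{p}^t$; hence $\xi_n=w_n/(E_nC_n)$ is strictly decreasing in $E_n=\tau\sum_{t'\in\mathcal T}\sum_{m\in\mathcal M}\sum_{i\in\mathcal I_{m,n}}p_{i,n}^{t'}$, and within time interval~$t$ only $\tau\sum_{i\in\mathcal I}p_{i,n}^{t}$ is controllable (since $\{\mathcal I_{m,n}\}_m$ partitions $\mathcal I$). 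Consequently, if a feasible power vector $\mathbf{p}^t_{*}$ exists that is component-wise no larger than every other feasible $\mathbf{p}^t$, it simultaneously minimizes $E_n$ for every slice~$n$, hence maximizes every $\xi_n$, hence is optimal for P2 for all weights $\lambda_n>0$. The task is therefore to exhibit such a minimal feasible point.

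Next I would rewrite \eqref{eq9}. Substituting the definitions of $\gamma_{i,n}^{t}$, $\bar p_{i,n}^{t}$, $I_{i,n}^{t}$ and $\delta_{i,n,i',n'}^{t}$, constraint \eqref{eq9} reads, for every $i\in\mathcal I_{m,n}$,
\begin{equation*}
 p_{i,n}^{t}\,\frac{h_{i,n,i,n}^{t}}{w_{i,n}^{t}\eta_n}\ \geq\ \rho\,\gamma_n^{\min}\Big(N_0+\sum_{n'\in\mathcal N}\sum_{i'\in\mathcal I}\delta_{i,n,i',n'}^{t}\,p_{i',n'}^{t}\Big).
\end{equation*}
Stacking these $IN$ inequalities and using the notation of \eqref{eq13}--\eqref{eq18p}, this is the vector inequality $\hat{\mathbf H}^t\mathbf p^t\geq \rho(\mathbf D^t\mathbf p^t+\mathbf b)$, where $\mathbf D^t$ is the block matrix appearing in \eqref{eq12} (blocks $\gamma_n^{\min}\mathbf{\Omega}_{n,n'}^{t}$) and $\mathbf b=[\gamma_n^{\min}N_0]_{IN\times1}$; equivalently $\mathbf p^t\geq\mathbf T(\mathbf p^t)$ with $\mathbf T(\mathbf p)=\rho(\hat{\mathbf H}^t)^{-1}(\mathbf D^t\mathbf p+\mathbf b)$. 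Because $\delta_{i,n,i',n'}^{t}\geq 0$ and $h_{i,n,i,n}^{t},w_{i,n}^{t}\eta_n>0$, $\mathbf T$ is affine with a nonnegative coefficient matrix and a strictly positive offset, i.e.\ a standard interference function (positive, monotone, scalable).

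Then I would invoke the fixed-point theory for such maps. Taking feasibility of P2 as a standing assumption --- consistent with the $\rho$-scaling/feedback mechanism mentioned after \eqref{eq9} --- the existence of some $\mathbf p^t$ with $\mathbf p^t\geq\mathbf T(\mathbf p^t)$ forces $\rho$ times the spectral radius of $(\hat{\mathbf H}^t)^{-1}\mathbf D^t$ to be strictly below one; hence $\hat{\mathbf H}^t-\rho\mathbf D^t$ is a nonsingular $M$-matrix, $\mathbf T$ has the unique fixed point $\mathbf p^t_{*}=\rho(\hat{\mathbf H}^t-\rho\mathbf D^t)^{-1}\mathbf b$, this vector is strictly positive (so \eqref{p1}d holds, using that the $M$-matrix inverse dominates $(\hat{\mathbf H}^t)^{-1}$ and $\mathbf b>0$), and by monotonicity of $\mathbf T$ any feasible $\mathbf p^t$ obeys $\mathbf p^t\geq\mathbf T(\mathbf p^t)\geq\mathbf T^{(2)}(\mathbf p^t)\geq\cdots\to\mathbf p^t_{*}$, so $\mathbf p^t_{*}$ is component-wise minimal over the set defined by \eqref{eq9}. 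Finally I would check the per-BS budget \eqref{eq6}: any $\mathbf p^t$ feasible for the full P2 dominates $\mathbf p^t_{*}$ component-wise, so each partial sum $\sum_{i\in\mathcal I_{m,n}}p^t_{i,n,*}$ is bounded by the corresponding partial sum of that point and hence by $p_\mathrm{MBS}$ or $p_\mathrm{SBS}$; thus $\mathbf p^t_{*}$ itself satisfies \eqref{eq6}. Combined with the first step, $\mathbf p^t_{*}$ is optimal for P2, which is \eqref{eq12}.

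The step I expect to be the main obstacle is making the fixed-point argument airtight: guaranteeing nonsingularity of $\hat{\mathbf H}^t-\rho\mathbf D^t$ and positivity of $\mathbf p^t_{*}$ via the spectral-radius (feasibility) condition and Perron--Frobenius theory, and arguing rigorously that the component-wise minimizer of the \eqref{eq9}-only relaxation remains feasible for the budget \eqref{eq6} (which never appears in the closed form). By contrast, the reduction in the first step and the algebra turning \eqref{eq9} into the matrix inequality are routine bookkeeping once $w_n$ and $C_n$ are seen to be constants with respect to $\mathbf p^t$.
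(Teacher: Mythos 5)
Your proof is correct, but it takes a genuinely different route from the paper's. The paper computes the Hessian of the per-interval objective $\Delta^t$ with respect to $\mathbf{p}^t$, observes that it is block-diagonal with positive-semidefinite rank-one blocks, concludes that $\Delta^t$ is convex, and then combines convexity with the fact that $\Delta^t$ decreases in every power component to assert that the optimum lies on the boundary of the feasible region with all SINR constraints \eqref{eq9} active, whence the linear system \eqref{eq30} and the closed form \eqref{eq12}. You instead bypass convexity entirely: you use only the component-wise monotonicity of the objective, recast \eqref{eq9} as $\mathbf{p}^t\geq\mathbf{T}(\mathbf{p}^t)$ for an affine standard interference mapping, and invoke the Yates/Perron--Frobenius machinery to produce the component-wise minimal feasible point as the unique fixed point. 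Your route is somewhat longer but buys three things the paper's argument does not: (i) it closes the logical gap in the step ``on the boundary $\Rightarrow$ all of \eqref{eq9} hold with equality'' (a maximizer of a convex function over a polyhedron sits at a vertex, but the active constraints there need not all be SINR constraints); (ii) it establishes, under a feasibility assumption, that $\hat{\mathbf{H}}^t-\rho\mathbf{D}^t$ is a nonsingular $M$-matrix and that $\mathbf{p}^t_{*}>0$, so the inverse in \eqref{eq12} actually exists and constraint~(\ref{p1}d) holds --- points the paper takes for granted; and (iii) it explicitly verifies the per-BS power budget \eqref{eq6}, which the paper's proof never mentions. The one thing you should make explicit is the standing feasibility assumption for P2, since the entire fixed-point argument (and indeed the paper's own formula) is vacuous without it; you flag this yourself, and it is the right caveat.
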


    \begin{proof}
        See Appendix~\ref{appendix:theorem1}.
    \end{proof}

\subsection{Local Optimum SC Search}\label{ss:LOS}

Given the solution to IM, determining the SC of all BSs for all slices in Problem~P1 remains a combinatorial optimization problem. To solve this problem, we propose an unsupervised-learning-assisted approach. The basic idea is to first iteratively find a locally optimal solution to SCM and then use a deep unsupervised learning technique to refine the locally optimal solution obtained by the iterative algorithm. We detail the designed iterative algorithm and the unsupervised-learning-assisted algorithm in this subsection and Subsection~\ref{ss:ULS}, respectively. 

    \begin{algorithm}[t]
         \caption{LOSCS Algorithm}\label{alg:LOS}
         \LinesNumbered
         \SetKwInOut{Input}{Input}
        \textbf{Input:} $\mathbf{W}$\\
        \textbf{Initialize:} Randomly select $m \in \mathcal{M} \backslash \{ 0 \}$, and set $\mathcal{M}^\text{s} = \mathcal{M} \backslash \{ 0 \}$, $\mathbf{l}^\mathrm{f}$, $\mathbf{l}^\mathrm{r}$, $\mathbf{a}$; \\
        Obtain $\mathbf{p}$ by Theorem~\ref{theorem1} given $\mathbf{W}$, $\mathbf{l}^\mathrm{f}$, $\mathbf{l}^\mathrm{r}$, and $\mathbf{a}$;\\
        Calculate $\Delta (\mathbf{l}^\mathrm{f}$, $\mathbf{l}^\mathrm{r}$, $\mathbf{a}, \mathbf{p})$ given $\mathbf{W}$;\\
        \While{$\mathcal{M}^\text{s} \neq \emptyset$}
        { 
            \For{$ \hat{\mathbf{l}}_{m} \in \mathcal{S}_m$}
             {	
                Obtain $\hat{\mathbf{l}}^\mathrm{f}$, $\hat{\mathbf{l}}^\mathrm{r}$, $\hat{\mathbf{a}}$ by updating the SC of SBS~$m$ with $\hat{\mathbf{l}}_{m}$;\\ 

             	\eIf{ Constraints~\eqref{eq2p}, \eqref{eq3p}, \eqref{eq2} are not satisfied }
        		{\textbf{Continue};}
        		{	
                    Obtain $\hat{\mathbf{p}}$ by Theorem~\ref{theorem1} given $\mathbf{W}$, $\hat{\mathbf{l}}^\mathrm{f}$, $\hat{\mathbf{l}}^\mathrm{r}$, and $\hat{\mathbf{a}}$;\\
                    Calculate $\Delta' (\hat{\mathbf{l}}^\mathrm{f}$, $\hat{\mathbf{l}}^\mathrm{r}$, $\hat{\mathbf{a}}, \hat{\mathbf{p}})$ given $\mathbf{W}$;\\

        			\eIf{$ \Delta' > \Delta$}
        			{$\Delta$ $\leftarrow$ $\Delta'$;\\ 
        			$\mathbf{l}_\mathrm{f}$, $\mathbf{l}_\mathrm{r}$, $\mathbf{a}$, $\mathbf{p}$ $\leftarrow$  $\hat{\mathbf{l}}_\mathrm{f}$, $\hat{\mathbf{l}}_\mathrm{r}$, $\hat{\mathbf{a}}$, $\hat{\mathbf{p}}$;\\
                    $\mathcal{M}^\text{s}$ $\leftarrow$ $\mathcal{M} \backslash \{ 0\}$;\\ 
        			}
        			{\textbf{Continue};}
        		}      	
             }
             $\mathcal{M}^\text{s}$ $\leftarrow$  $\mathcal{M}^\text{s} \backslash \{ m \}$;\\
             Randomly select $ m \in \mathcal{M}^\text{s}$;\\
         }
        \textbf{Output:} $\mathbf{l}_\mathrm{f}$, $\mathbf{l}_\mathrm{r}$, $\mathbf{a}$, $\mathbf{p}$, and $\Delta$

    \end{algorithm}

We present the local optimum SC search (LOSCS) algorithm, which iteratively updates the SC of each SBS, searching one SBS at a time, until no further energy efficiency improvement can be achieved by updating the SC of any SBS. Denote the objective function in Problem~P1 and the value of the objective function by $\Delta (\mathbf{l}^\mathrm{f}, \mathbf{l}^\mathrm{r}, \mathbf{a}, \mathbf{p})$ and $\Delta$, respectively. The algorithm is detailed in Algorithm~\ref{alg:LOS}. Let set~$\mathcal{M}^\text{s} \subset \mathcal{M}$ include the SBSs that have not been involved in the iterative search yet. Line~2 initializes set~$\mathcal{M}^\text{s} = \mathcal{M} \backslash \{ 0 \}$ and the SC of all SBSs, i.e., $\mathbf{l}^\mathrm{f}$, $\mathbf{l}^\mathrm{r}$, and $\mathbf{a}$, and randomly selects an SBS, i.e., SBS~$m$, to start searching. Given the initialized SC of SBSs, line~3 and line~4 obtain the optimal solution of IM, i.e., $\mathbf{p}$, and the corresponding value of the objective function in Problem~P1, i.e., $\Delta$. Line~5 to Line~21 search SCM solution for an SBS, corresponding to one iteration. Denote the SC of SBS~$m$ for the slices by vector $\mathbf{l}_{m} = [ l_{m,n} ]_{\forall n \in \mathcal{N}}$ which can be obtained by~\eqref{eq1}. We introduce $\mathcal S_{m}$ to represent the set that includes all possible combinations of the SC of SBS~$m$ for all slices, i.e., all possible values of vector $\mathbf{l}_{m}$ when they satisfy constraints~(\ref{p1}c) and (\ref{p1}e). During each iteration, we only search the SC of SBS~$m$ for all slices from set~$\mathcal S_{m}$ while keeping the SC of other SBSs fixed. If an SC combination yielding a larger value of $\Delta$, is found, the currently best SCM solution is updated, and set~$\mathcal{M}^\text{s}$ will be reset to the set of all SBSs; Otherwise, no change will be made. At the end of an iteration, another SBS is randomly selected from set~$\mathcal{M}^\text{s}$ for the next iteration, and the set~$\mathcal{M}^\text{s}$ is updated. All iterations stop if the set~$\mathcal{M}^\text{s}$ is an empty set, which means that a solution with a larger value of~$\Delta$ cannot be found by adjusting the SC of any SBS. The output of Algorithm~\ref{alg:LOS} is an SCM solution with the corresponding optimal solution of IM given by~\eqref{eq12}.

The computation complexity of Algorithm~\ref{alg:LOS} is $\mathcal{O}((L_\text{max}^{2}-L_\text{max})^{N} 2^{(M-1)N} I^{3}N^{3}T)$, where the computation complexity of IM in each time interval is $\mathcal{O}(I^{3}N^{3})$, and the computation complexity of SCM in each planning window is~$\mathcal{O}((L_\text{max}^{2}-L_\text{max})^{N} 2^{(M-1)N})$. Since the performance of the SCM solution found by Algorithm~\ref{alg:LOS} depends on the initial settings, we design an unsupervised-learning-assisted SC search (ULSCS) algorithm next to reduce the computation complexity of planning-stage resource management while enhancing the performance of the LOSCS algorithm by finding proper initial settings.

\subsection{Unsupervised-learning-assisted SC Search}\label{ss:ULS}

    \begin{figure}
        \begin{centering}
            \includegraphics[width=0.48\textwidth]{./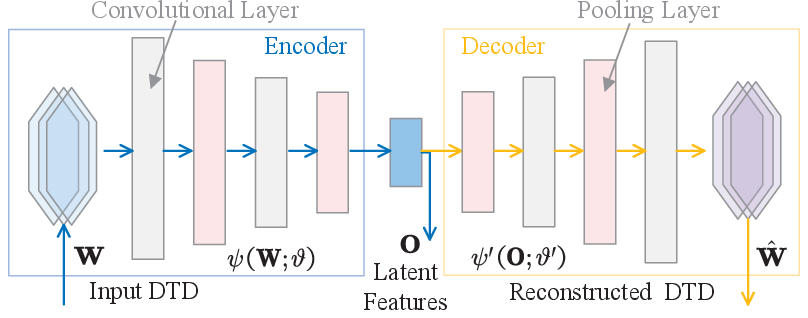}
            \par\end{centering}
        \centering{\caption{The designed DNN architecture of the auto-encoder.}\label{f:NN}}
    \end{figure} 

In each planning window, the SCM solution is related to the spatiotemporal service demands of all slices. The amount of downlink data traffic in each grid is continuous, whereas variables of SCM are discrete. As a result, similar~$\mathbf{W}$ in different planning windows may lead to the same optimal SCM solution. Thus, we propose a data-driven approach to utilize historical solutions for refining the SCM solution obtained by Algorithm~\ref{alg:LOS} in each planning window. The proposed approach consists of two components: feature extraction and solution refinement. First, we leverage an auto-encoder, a deep unsupervised learning technique, to extract the implicit and low-dimensional features of~$\mathbf{W}$ in a planning window. Second, by comparing the extracted features of~$\mathbf{W}$ in the historical and the subsequent planning window, we select some historical solutions to use as the initial settings of Algorithm~\ref{alg:LOS}. The network energy efficiency, i.e., $\Delta$, is non-decreasing over the iterations of Algorithm~\ref{alg:LOS}. As a result, choosing a historical SCM solution as the initial settings results in a relatively high performance compared to Algorithm~\ref{alg:LOS}, and the worst-case network energy efficiency equals that obtained by Algorithm~\ref{alg:LOS}.   

\subsubsection{Feature Extraction} 

Considering that the value of~$\mathbf{W}$ may vary across planning windows, we name the matrix~$\mathbf{W}$ in a planning window as a \emph{DTD instance}. The selection of a solution from a historical planning window is based on whether or not the DTD instance in the historical planning window is similar to that in the upcoming planning window. However, due to the high dimensionality of DTD instances, comparing every element in the two DTD instances is time-consuming. Therefore, reducing the dimensionality of DTD instances while retaining their essential information is important to the comparison. We utilize the deep auto-encoder technique to obtain a low-dimension representation of a DTD instance, named~\emph{latent features}. Fig.~\ref{f:NN} shows our design of deep neural networks (DNNs) for implementing the auto-encoder. The DNNs include two main parts: an encoder and a decoder. The encoder is a non-linear mapping function from a high dimensional space to a low dimensional space, i.e., extracting latent features from a DTD instance, and the decoder is a non-linear mapping function from a low dimensional space to a high dimensional space, i.e., reconstructing a DTD instance based on the latent features. Both parts are implemented by DNNs, and the DNN architecture of the decoder mirrors that of the encoder. In the training phase, the DNNs of both the encoder and the decoder are trained with the goal of minimizing the difference between the input and the reconstructed DTD instances. In the inference phase, only the DNN of the encoder is used for feature extraction~\cite{ghasedi2017deep}.

Denote the extracted latent features from a DTD instance by $\mathbf{O}$, and the sets of all possible values of $\mathbf{W}$ and $\mathbf{O}$ by~$\mathcal{W}$ and~$\mathcal{O}$, respectively. We define the encoder as the function~$\psi : \mathcal W  \rightarrow \mathcal O$ and the decoder as the function~$\psi' :\mathcal O  \rightarrow \mathcal W$. Let vectors~$\boldsymbol{\vartheta}$ and $\boldsymbol{\vartheta}'$ denote the parameters of DNNs of the encoder and the decoder, respectively. According to the designed DNN architecture for the auto-encoder, $\mathbf{W}$ and~$\mathbf{O}$ satisfy the following relations: $\mathbf{O} = \psi(\mathbf{W}; \boldsymbol{\vartheta})$ and $\hat{\mathbf{W}} = \psi'(\mathbf{O}; \boldsymbol{\vartheta}')$, where $\hat{\mathbf{W}}$ denotes the reconstructed DTD instance from the latent features~$\mathbf{O}$. To extract the latent features without neglecting useful information, the input and the reconstructed DTD instances should be as similar as possible. Therefore, the optimal values of parameters~$\boldsymbol{\vartheta}$ and~$\boldsymbol{\vartheta}'$, denoted by~$\boldsymbol{\vartheta}_{*}$ and~${\boldsymbol{\vartheta}}'_{*}$, are obtained by the following equation:
    \begin{equation}\label{eq17}
        \begin{aligned}
            \{\boldsymbol{\vartheta}_{*}, {\boldsymbol{\vartheta}}'_{*} \} & = \arg \min_{\{ \boldsymbol{\vartheta}, \boldsymbol{\vartheta}'\}} F (\mathbf{W}, \hat{\mathbf{W}}) \\
             & = \arg \min_{\{\boldsymbol{\vartheta}, \boldsymbol{\vartheta}'\}} F (\mathbf{W}, \psi'(\mathbf{O}; \boldsymbol{\vartheta}')) \\
             & = \arg \min_{\{\boldsymbol{\vartheta}, \boldsymbol{\vartheta}'\}} F (\mathbf{W}, \psi'(\psi(\mathbf{W}; \boldsymbol{\vartheta}); \boldsymbol{\vartheta}')),
        \end{aligned} 
    \end{equation}
where $F (\mathbf{W}, \hat{\mathbf{W}})$ is the cross-entropy loss function~\cite{ghasedi2017deep}. The optimal values of parameters, i.e.,~$\boldsymbol{\vartheta}_{*}$ and~${\boldsymbol{\vartheta}'}_{*}$ are obtained by using the gradient descent method to minimize the loss function~$F (\mathbf{W}, \hat{\mathbf{W}})$. The data regarding DTD instances in the set~$\Upsilon$ are utilized to train the DNNs and obtain the optimal parameters offline.         

\subsubsection{Solution Refinement} 

Using the extracted latent features of DTD instances, we define the similarity of two DTD instances in different planning windows, i.e.,~$\mathbf{W}$ and $\mathbf{W}'$, as follows:
    \begin{equation}\label{eq18}
        \begin{aligned}
            D(\mathbf{W}, \mathbf{W}') & = \frac{\psi(\mathbf{W} ; \boldsymbol{\vartheta}_{*}) \psi(\mathbf{W}'; \boldsymbol{\vartheta}_{*})}{\| \psi(\mathbf{W}; \boldsymbol{\vartheta}_{*}) \| \| \psi(\mathbf{W}'; \boldsymbol{\vartheta}_{*}) \|} \\
            & = \frac{\mathbf{O} \cdot \mathbf{O}'}{\| \mathbf{O} \| \| \mathbf{O}' \|},
        \end{aligned}
    \end{equation}
where $\mathbf{O}= \psi(\mathbf{W}; \boldsymbol{\vartheta}_{*})$ and $\mathbf{O}' = \psi(\mathbf{W}'; \boldsymbol{\vartheta}_{*})$ denote the latent features of DTD instances~$\mathbf{W}$ and~$\mathbf{W}'$ given the well-trained DNN of the encoder with parameter~$\boldsymbol{\vartheta}_{*}$, respectively.  

    \begin{algorithm}[t]
        \caption{ULSCS Algorithm}\label{alg:ULS}
        \LinesNumbered
        \SetKwInOut{Input}{Input}
        \textbf{Input:} $\boldsymbol{\vartheta}_{*}$, $\mathbf{W}$, $|\Upsilon^\text{re}|$, and $\Upsilon$ \\
        Calculate the similarity between $\mathbf{W}$ and each DTD instance, i.e., $\mathbf{W}'$, contained in~$\Upsilon$ by \eqref{eq18};\\
        $\Upsilon^\text{re}$ $\leftarrow$ Select data records containing the $|\Upsilon^\text{re}|$ most similar DTD instances from set~$\Upsilon$;\\
        Obtain $\Delta$, $\mathbf{l}_\mathrm{f}$, $\mathbf{l}_\mathrm{r}$, $\mathbf{a}$, $\mathbf{p}$ by Algorithm~\ref{alg:LOS}, given $\mathbf{W}$;\\
        \For{$ \upsilon \in \Upsilon^\mathbf{re} $}
        {   
            Obtain $\mathbf{l}_\mathrm{f}^\text{re}$, $\mathbf{l}_\mathrm{r}^\text{re}$, $\mathbf{a}^\text{re}$ from data record~$\upsilon$;\\

            Obtain $\Delta'$, $\mathbf{l}_\mathrm{f}'$, $\mathbf{l}_\mathrm{r}'$, $\mathbf{a}'$, $\mathbf{p}'$ by Algorithm~\ref{alg:LOS} given $\mathbf{W}$ and the initial settings of $\mathbf{l}_\mathrm{f}^\text{re}$, $\mathbf{l}_\mathrm{r}^\text{re}$, and $\mathbf{a}^\text{re}$;\\ 
            \eIf{$ \Delta' > \Delta$}
                {$\Delta$, $\mathbf{l}_\mathrm{f}$, $\mathbf{l}_\mathrm{r}$, $\mathbf{a}$, $\mathbf{p}$  $\leftarrow$ $\Delta'$, $\mathbf{l}_\mathrm{f}'$, $\mathbf{l}_\mathrm{r}'$, $\mathbf{a}'$, $\mathbf{p}'$;\\}
                {\textbf{Continue};}

        }
        Create a data record $\upsilon'$ containing $\mathbf{W}$, $\mathbf{l}_\mathrm{f}$, $\mathbf{l}_\mathrm{r}$, $\mathbf{a}$, and $\mathbf{p}$;\\
        Add $\upsilon'$ to $\Upsilon$;\\ 
        \textbf{Output:} $\mathbf{l}_\mathrm{f}$, $\mathbf{l}_\mathrm{r}$, $\mathbf{a}$, $\mathbf{p}$,  and $\Delta$
    \end{algorithm} 

Algorithm~\ref{alg:ULS} presents the procedure for refining the solutions obtained by Algorithm~\ref{alg:LOS}. We refer to the collection of information on the DTD instance, i.e.,~$\mathbf{W}$, and the corresponding solution obtained by Algorithm~\ref{alg:LOS}, i.e.,~$\mathbf{l}_\mathrm{f}$, $\mathbf{l}_\mathrm{r}$, $\mathbf{a}$, and $\mathbf{p}$, in a planning window as a data record, denoted by $\upsilon$. Denote the set of data records and the number of data records in the set by~$\Upsilon$ and~$|\Upsilon|$, respectively. The value of~$|\Upsilon|$ can be determined by balancing the computation complexity and the performance of the ULSCS algorithm. Using~\eqref{eq18}, Line~2 calculates the similarity between the DTD instance in the upcoming planning window, i.e.,~$\mathbf{W}$, and each DTD instance in the set~$\Upsilon$. Based on the calculated similarities, a set of data records containing the~$|\Upsilon^\text{re}|$ most similar DTD instances, denoted by $\Upsilon^\text{re} \subseteq \Upsilon$, is selected. Line~4 obtains the solution to Problem~P1, i.e.,~$\mathbf{l}_\mathrm{f}$, $\mathbf{l}_\mathrm{r}$, $\mathbf{a}$, and $\mathbf{p}$, and the corresponding performance~$\Delta$ by calling Algorithm~\ref{alg:LOS}. From Lines~6 to~12, each historical SCM solution in the set~$\Upsilon^\text{re}$, i.e., $\mathbf{l}_\mathrm{f}^\text{re}$, $\mathbf{L}_\mathrm{r}^\text{re}$, and $\mathbf{a}^\text{re}$, is used in the initialization step (Line 2) of Algorithm~\ref{alg:LOS}, and the corresponding performance~$\Delta'$ and solution~$\mathbf{l}_\mathrm{f}'$, $\mathbf{l}_\mathrm{r}'$, $\mathbf{a}'$, $\mathbf{p}'$ are obtained. If~$\Delta' > \Delta$, the solution to Problem~P1 is updated as~$\mathbf{l}_\mathrm{f}'$, $\mathbf{l}_\mathrm{r}'$, $\mathbf{a}'$, $\mathbf{p}'$; Otherwise, the solution to Problem~P1 remains $\mathbf{l}_\mathrm{f}$, $\mathbf{l}_\mathrm{r}$, $\mathbf{a}$, and $\mathbf{p}$. As a result, the performance of Algorithm~\ref{alg:ULS} is either better than or equal to that of Algorithm~\ref{alg:LOS}. When all historical SCM solutions in the set~$\Upsilon^\text{re}$ have been utilized, lines~14 and~15 create a new data record containing the DTD instances and the corresponding solution, i.e.,~$\mathbf{l}_\mathrm{f}$, $\mathbf{l}_\mathrm{r}$, $\mathbf{a}$, and $\mathbf{p}$, and add the data record to the set~$\Upsilon$, which can be useful in subsequent planning windows.

By using deep unsupervised learning, Algorithm~\ref{alg:ULS} can reduce the computation complexity of planning-stage resource management of Algorithm~\ref{alg:LOS} when the set~$\Upsilon$ contains extensive historical data records. The computation complexity of the Algorithm~\ref{alg:ULS} is $\mathcal{O}(|\Upsilon| O X I^{3}N^{3})$ for selecting the best solution to~Problem~P2 from set~$\Upsilon$, where $O$ represents the dimensionality of latent feature~$\mathbf{O}$, $X = \sum_{j=1}^{J-1}{B_{j}B_{j+1}}$ denotes the computation complexity of the inference of the encoder (i.e., DNN~$\psi$) with~$J$ layers, and~$B_{j}$ represents the number of neurons in layer~$j$. Similar to the scheme used for experience replay in reinforcement learning~\cite{zhou2020deep,liu2020deepslicing}, we fix the maximum number of data records in the set~$\Upsilon$, i.e.,~$|\Upsilon|$, and keep the newly collected data records in~$\Upsilon$. As a result, by collecting and using new data records, Algorithm~\ref{alg:ULS} can enhance the performance of Algorithm~\ref{alg:LOS} while avoiding high computation complexity.

\section{Performance Evaluation}

In this section, we first introduce the simulation settings. Then, we evaluate the performance of the proposed RAN slicing framework with the proposed AI-assisted approach.

\subsection{Simulation Settings}

The maximum SC and the antenna height of all SBSs are set to identical. The SC radius of the MBS and the maximum SC radius of each SBS are set to~1,500\,m and 850\,m, respectively. The carrier frequency of each BS is set to 1,500 MHz. The total available bandwidth of each BS and the sub-carrier spacing are set to 100\,MHz and 30\,kHz, respectively. Based on the COST 231-Hata Model in 3GPP standard~\cite{3GPP43030}, the average channel gain of downlink transmission within grid~$i$ for slice~$n$ in time interval~$t$, i.e.,~$h_{i,n,i',n'}^{t}$, is approximated as the following equation:
    \begin{equation}\label{eq19}
       \begin{aligned} 
            h_{i,n}^{t} & =  46.55 + 33.81 \times \log(f^\text{c}_{m}) - 13.82 \times \log(H_{m}) + \\
            & ((44.9 - 6.55 \times \log(H_{m})) \times \log(d_{m_{i,n},i}),
       \end{aligned} 
    \end{equation}
where~$d_{m_{i,n},i}$ is the distance (in kilometers) between BS~$m_{i,n}$ and the center of grid~$i$, $f^\text{c}_{m}$ is the carrier frequency (in MHz) of BS~$m$,~$H_{m}$ is the antenna height (in meters) of BS~$m$, and~$H_\text{MBS}$ and~$H_\text{SBS}$ represent the antenna heights of the MBS and each SBS, respectively. UTs within the network coverage area in a time interval are distributed according to a Poisson point distribution (PPP). The rates of the PPP are the same across all time intervals within each planning window but different across planning windows. For each UT, its downlink data traffic load follows a Poisson process during each planning window. The mean values of downlink data traffic loads are different among UTs. We randomize the mean downlink data traffic load for each UT during a planning window within the interval of $[0.1, 1.5]$\,Mbits. Other simulation parameters are listed in Table~II.

    \begin{table}[t]
        \normalsize
        \centering
        \captionsetup{justification=centering,singlelinecheck=false}
        \caption{Simulation Parameters}\label{table2}
        \begin{tabular}{c|c|c|c}
            \hline\hline
             Parameter & Value & Parameter & Value\\
             \hline\hline
             $N$ & 2 & T & 3 \\
             \hline
            $\left[\gamma_{1}^\mathrm{min}, \gamma_{2}^\mathrm{min} \right]$ & [7, 11] dB  & $\left[\lambda_1, \lambda_2\right]$ & [1, 1]\\
             \hline
             $H_\text{MBS}$ & 50\,m  & $H_\text{SBS}$ & 15\,m \\
             \hline
             $\rho$ & 1 & $N_0$ & -174\,dBm/Hz \\
             \hline
        \end{tabular}
    \end{table}

The implementation of the DNNs for the auto-encoder is as follows. The DNN of the encoder contains 3 convolutional layers with channel sizes of 32, 64, and 128 respectively. The kernel size is set as (3, 3) for both convolutional layers, respectively. Each convolutional layer is followed by a max-pooling layer with pool size (2, 2). Two fully-connected layers are then added with 512 and 64 neurons, followed by the output layer. The DNN architecture of the decoder is the reverse of that of the encoder. We adopt the Adam optimizer to train the DNNs. There are 8,000 different DTD instances used for the DNN training.

We compare the proposed RAN slicing framework with the following two benchmark schemes for IM and SCM, respectively:
    \begin{itemize}
        \item \emph{Cell-based IM:} The downlink transmission power of each BS is the same for all grids within the SC of the BS;
        \item \emph{Cell zooming (CZ):} The SC of each SBS is the same for all slices.
    \end{itemize}

\subsection{Performance of Grid-based IM}

In this subsection, we investigate the performance of the proposed grid-based IM in a simple network scenario with 1~MBS, 1~SBS, and 1~slice. 

    \begin{figure}[t]
      \centering
        \subfigure[Average downlink transmission power of the SBS and the MBS versus SC radius.]
        {\includegraphics[width=0.45\textwidth]{./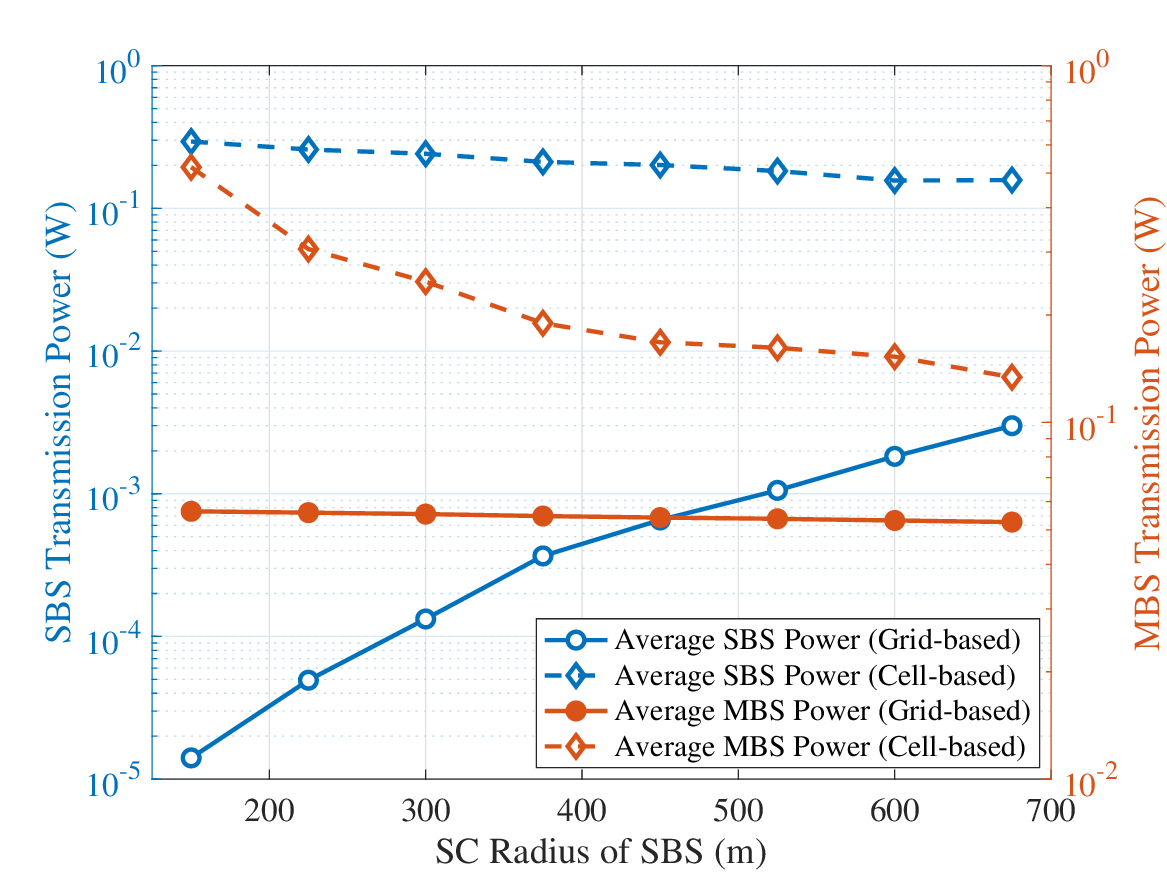}}
    	\subfigure[Total downlink transmission power and network energy efficiency versus SC radius.]
    	{\includegraphics[width=0.45\textwidth]{./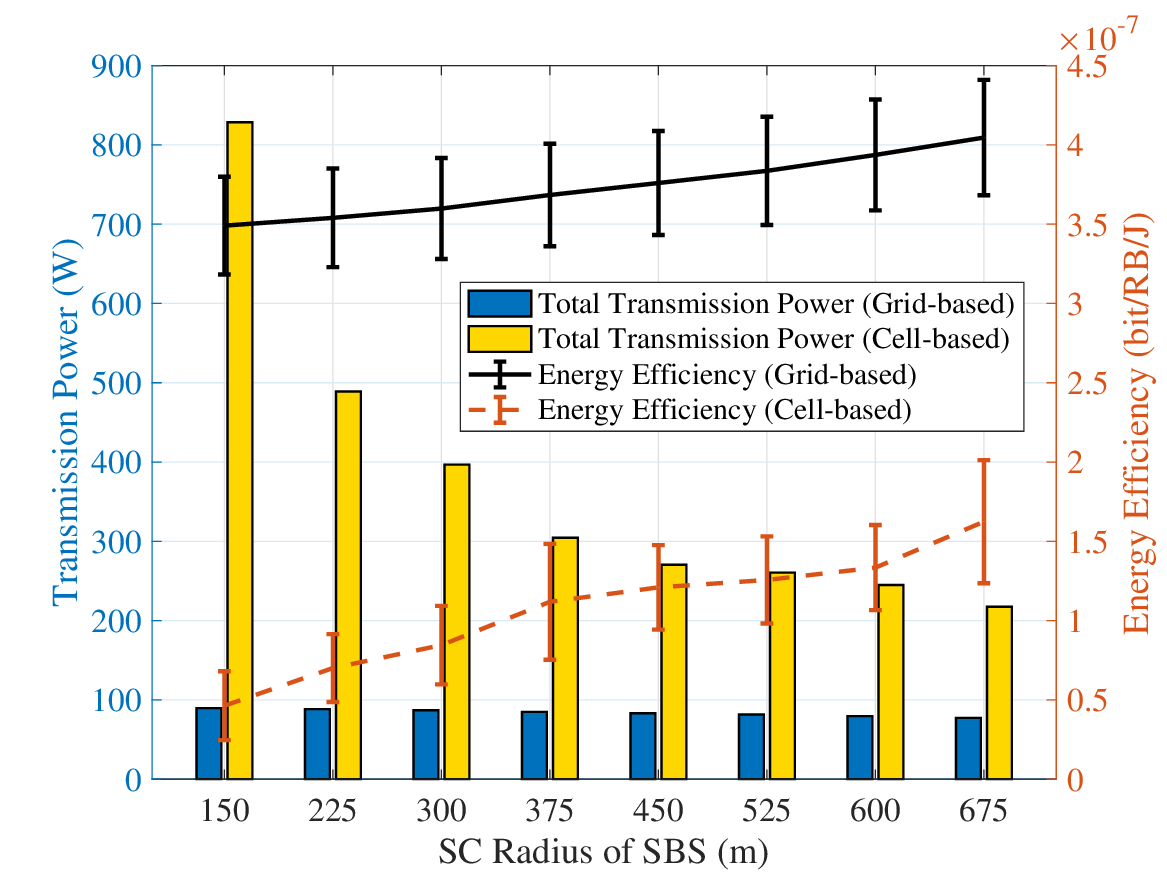}} \\
      \caption{Comparison between the proposed grid-based IM and cell-based IM.}
    	\label{r:power}
    \end{figure}

In Fig.~\ref{r:power}(a), we compare the performance of transmission power obtained by the proposed grid-based IM with that obtained by cell-based IM. We average the transmission power of all grids within the SC of each BS for comparison. To satisfy the SINR requirement of the slice, the average transmission power of the MBS decreases, and the average transmission power of the SBS increases with the SC radius of the SBS for both grid-based and cell-based IM. This is because the number of grids covered by the MBS and the SBS decreases and increases, respectively. However, the MBS and SBS can achieve lower transmission power with grid-based IM compared to cell-based IM since the proposed grid-based IM can differentiate the transmission power based on their different locations. In addition, the slopes of all curves can vary with the SC radius of the SBS. This is because the uneven spatial distribution of data traffic loads results in non-uniform increments of data traffic loads for both the SBS and the MBS.

As shown in Fig.~\ref{r:power}(b), we compare the performance of the two schemes in total transmission power and network energy efficiency. We observe that the proposed grid-based IM achieves higher network energy efficiency and lower total transmission power. The reason is that the proposed grid-based IM has a higher spatial granularity. Thus, the transmission power for each grid can be individually optimized to mitigate the interference among BSs in accordance with the DTD and the BS locations.  
 
    \begin{figure}[t]  
     \centering   
      \includegraphics[width=0.45\textwidth]{./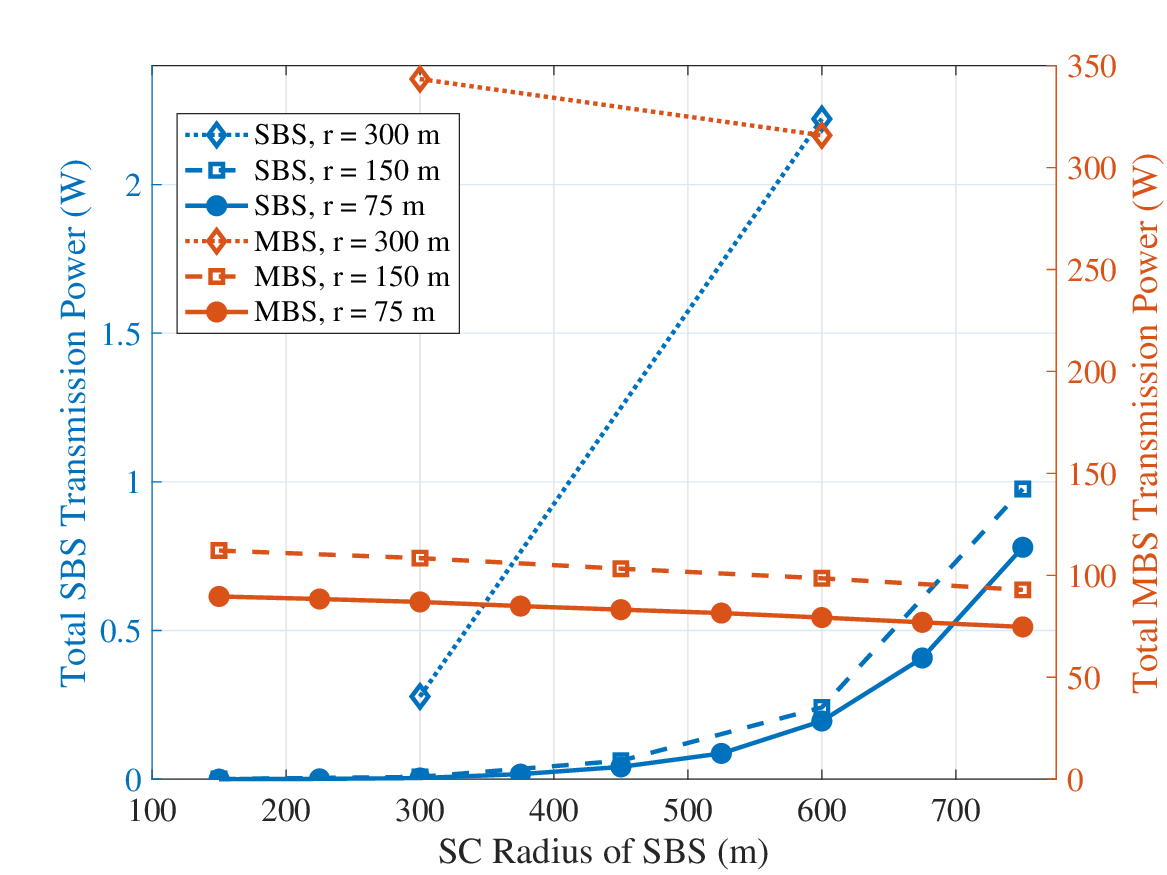}\\  
      \caption{The impact of spatial granularity on IM.}\label{r:grid}
    \end{figure}

Next, we examine the impact of the spatial granularity on the network energy efficiency of grid-based IM. Fig.~\ref{r:grid} shows the total transmission power of the MBS and the SBS of grid-based IM with different grid diameters, i.e., different values of~$r$. From this figure, we can make three observations. First, similar to case in Fig.~\ref{r:power}, the total transmission power of the MBS of grid-based IM increases with the SC radius of the SBS, while the total transmission power of the SBS of grid-based IM decreases with the SC radius of the SBS. Second, with grid-based IM, the total transmission power of each BS decreases with the grid diameter. This is because, when the grid diameter is smaller, the network can be divided into more grids and IM can be more fine-grained to suit the specific DTD. Third, if the grid diameter is sufficiently small (e.g., below 150\,m), the effect of further decreasing the grid diameter on the total transmission power diminishes. This is because the total transmission power of each BS must exceed a threshold to satisfy the SINR requirement of each slice given a DTD.

\subsection{Performance of Slicing-based Resource Management}

    \begin{figure}[t]  
     \centering   
      \includegraphics[width=0.45\textwidth]{./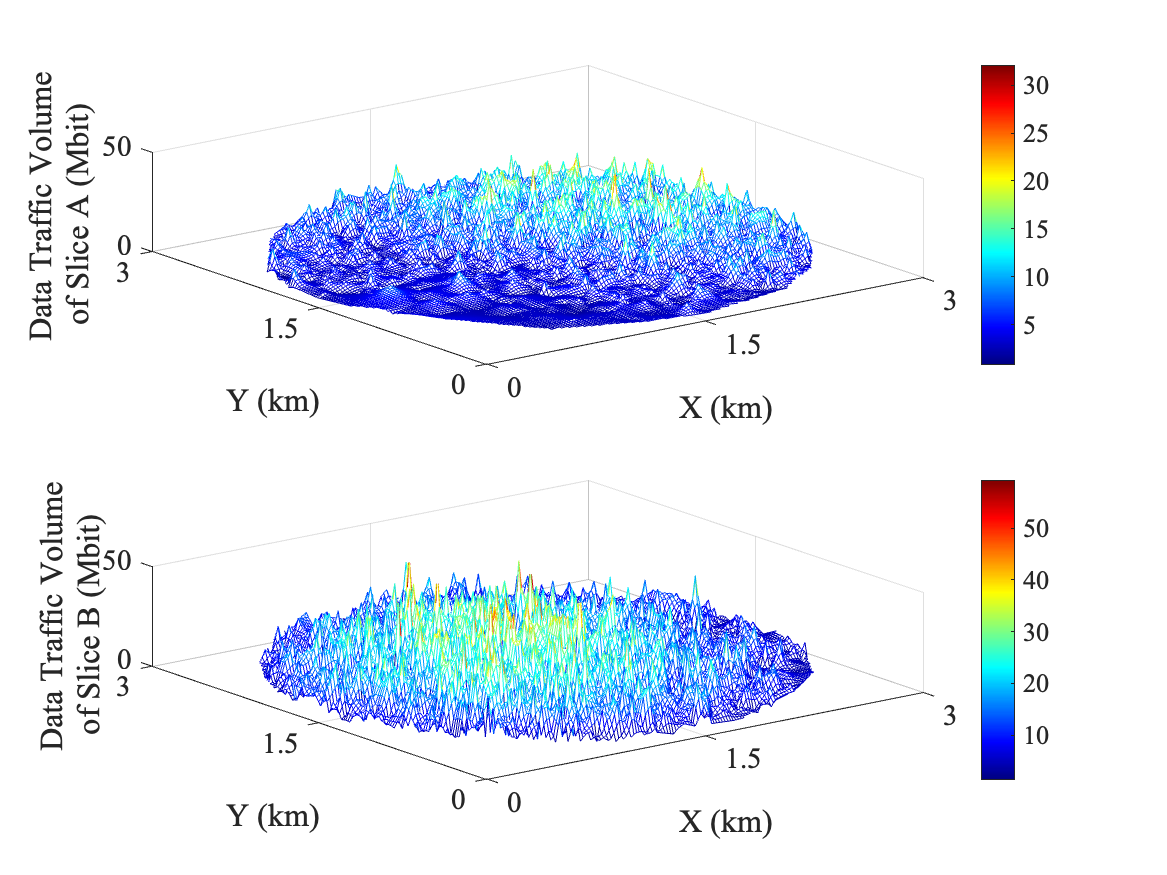}\\  
      \caption{The DTDs of two slices in a time interval.}\label{r:DTD}
    \end{figure}

In this subsection, we examine the performance of the proposed RAN slicing framework in a network scenario with 1~MBS, 1 to 8~SBSs, and 2 slices. The DTDs of the two slices are different in a planning window. The DTDs of the two slices in a time interval is shown in Fig.~\ref{r:DTD}.

Considering the network scenario with 1 MBS, 8 SBSs, and 2 slices, we compare the performance of the proposed schemes with benchmark schemes as shown in Fig.~\ref{r:traffic}. In Fig.~\ref{r:traffic}(a), we compare the network energy efficiency of the proposed flexible binary slice zooming plus grid-based IM (abbreviated as "SZ+ Grid-based IM") with that of two benchmark schemes, named ``CZ + Cell-based IM'' and ``CZ+ Grid-based IM'', averaged over 20 DTD instances. Three observations can be made from this figure. First, the network energy efficiency of all schemes increases with the number of SBSs. This is because, more SBSs can cover more grids, and the downlink transmissions within the grids from SBSs have a higher channel gain than that from the MBS, thereby improving network energy efficiency. Second, the proposed scheme outperforms the benchmark schemes in network energy efficiency in the cases with different number of SBSs. The reason is that the proposed scheme achieves fine-grained IM and SCM in time, space, and slices dimensions based on the different SINR requirements and DTDs of slices. Third, by comparing the ``CZ + Cell-based IM'' scheme with the ``CZ + Grid-based IM'' scheme, the performance advantage, i.e., the improvement (in percentage) of the proposed "SZ+ Grid-based IM" scheme compared to the ``CZ+ Grid-based IM'' scheme, increases with the number of SBSs. This is because, as more SBSs are deployed, the proposed scheme has more SC options available for selection, resulting in better interference management among BSs. Therefore, the percentage improvement compared to other schemes increases with the number of SBSs. 

In Fig.~\ref{r:traffic}(b), we show the temporal variations in network energy efficiency of the proposed scheme across multiple planning windows. The Poisson data arrival rate averaged over all UTs in each slice varies across planning windows, and, accordingly, the network energy efficiency of the proposed scheme temporally varies. Meanwhile, we can observe that the proposed scheme outperforms that of the ``CZ + Grid-based IM'' scheme in each planning window due to the high adaptivity of the proposed scheme in coping with spatiotemporal network dynamics. Fig.~\ref{r:traffic}(c) shows the cumulative distribution function of the network energy efficiency of the three schemes over the 40 different DTD instances in the same case. We can observe from Fig.~\ref{r:traffic}(c) that the proposed schemes achieve higher network energy efficiency than the benchmark schemes for most DTD instances. 

    \begin{figure*}[t]
      \centering
            \subfigure[Average network energy efficiency versus the number of SBSs.]
        {\includegraphics[width=0.31\textwidth]{./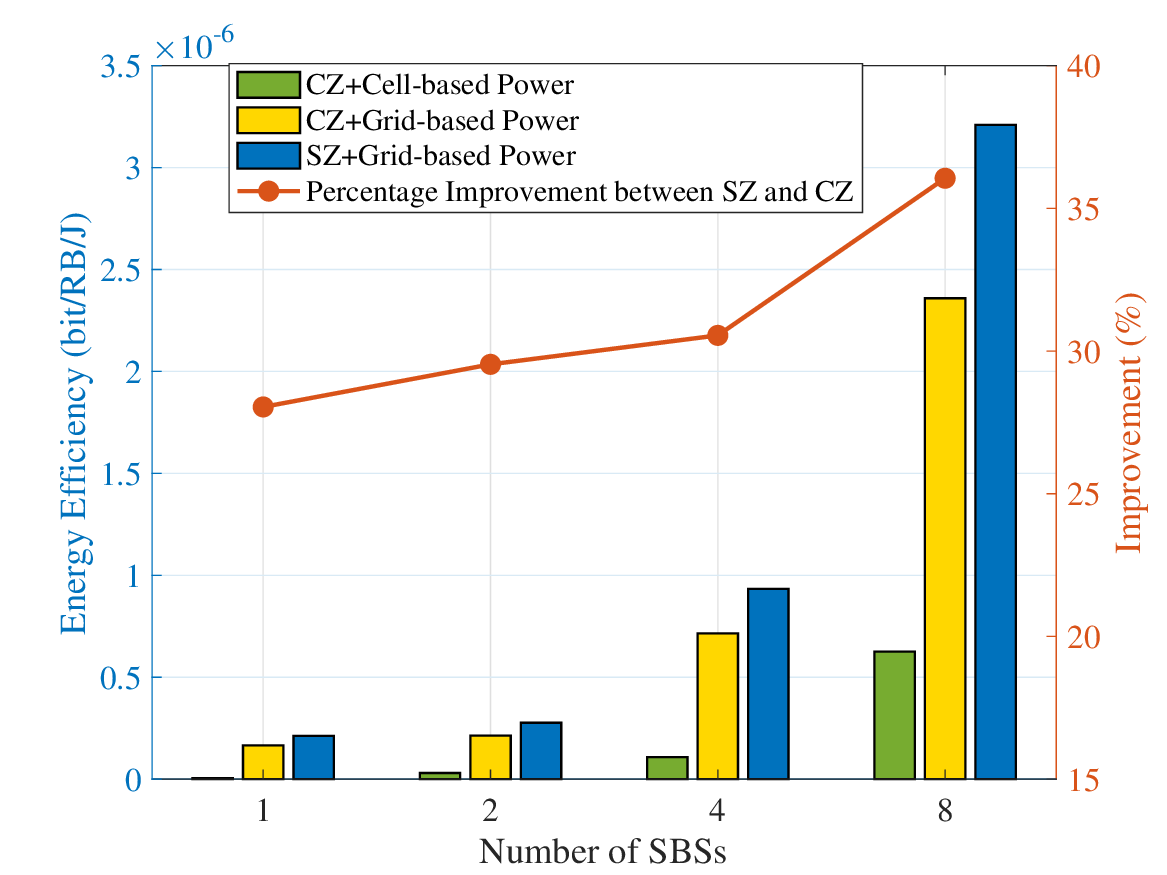}} 
            \subfigure[Temporal variations in data traffic load and network energy efficiency.]
        {\includegraphics[width=0.31\textwidth]{./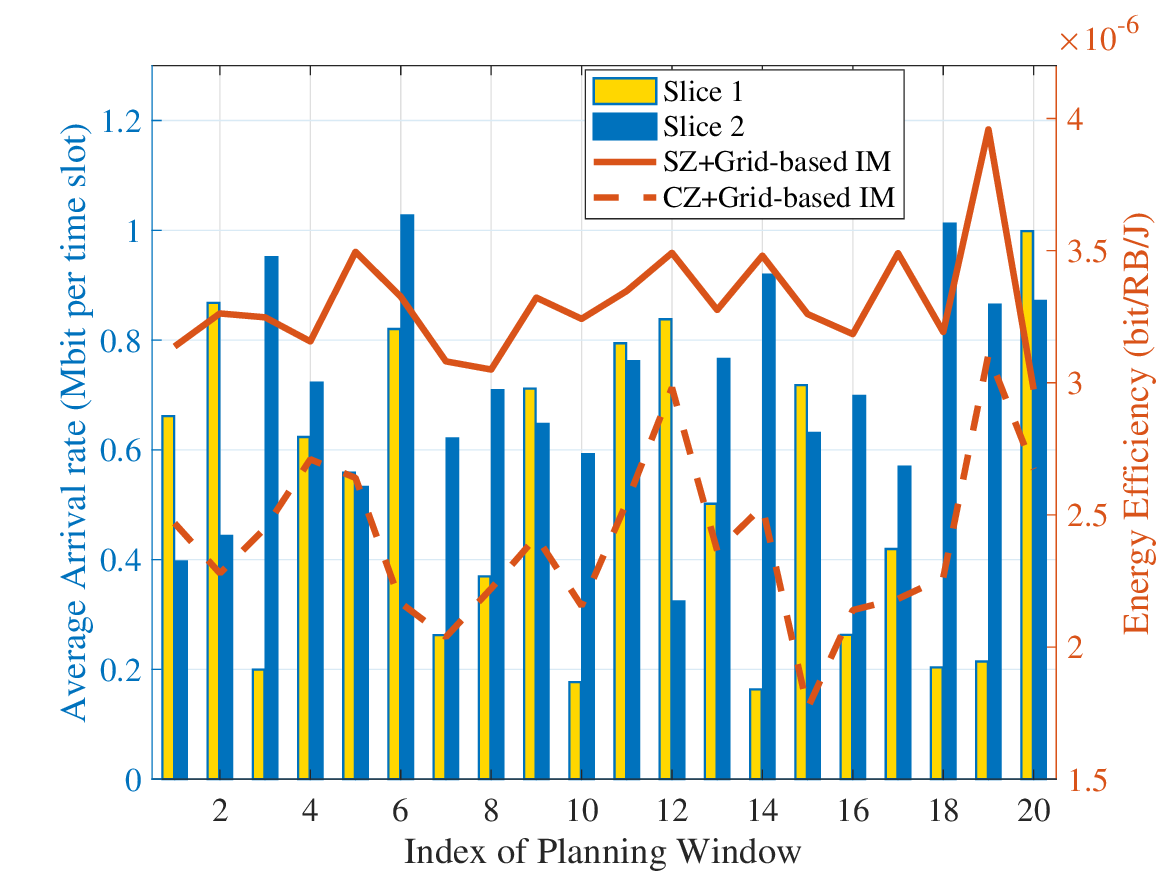}} 
            \subfigure[The cumulative distribution function of network energy efficiency.]
        {\includegraphics[width=0.31\textwidth]{./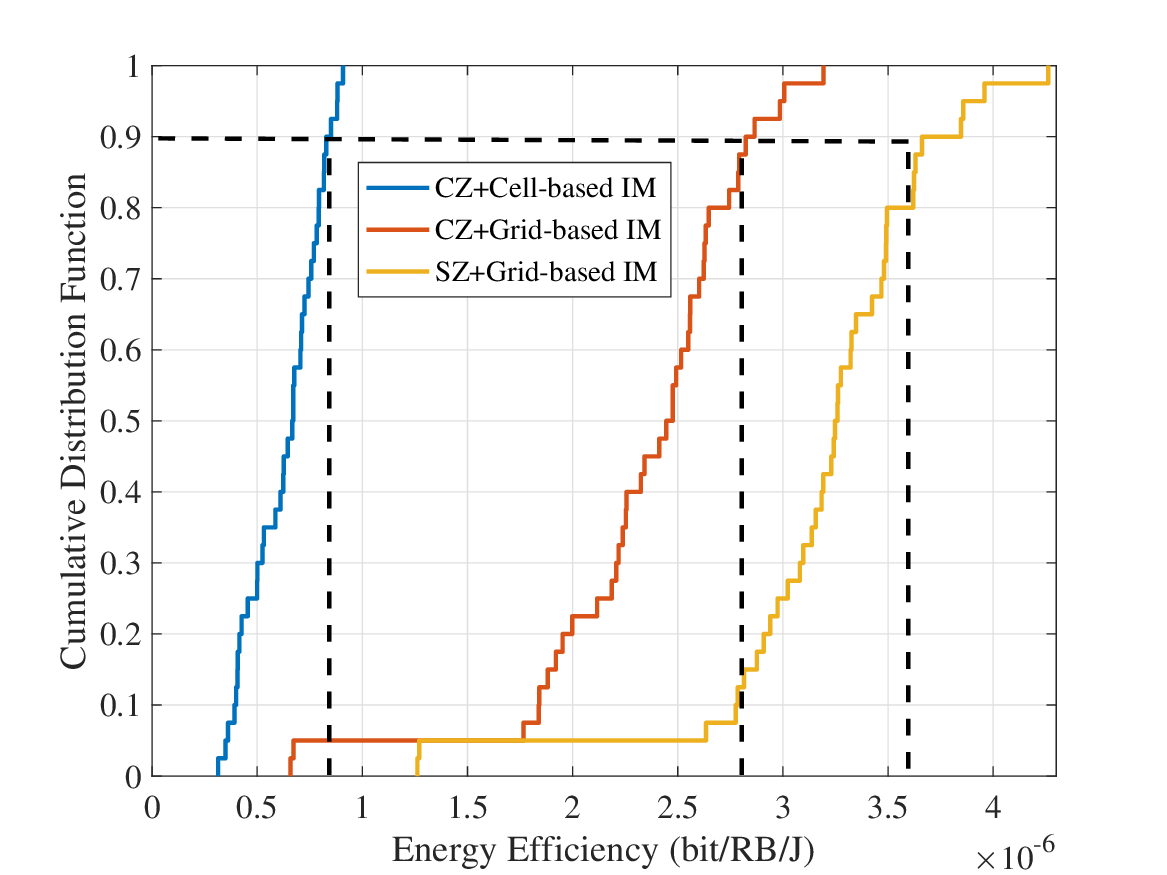}}\\
      \caption{Performance comparison between the proposed schemes and benchmark schemes.}
        \label{r:traffic}
    \end{figure*}

        \begin{figure*}[t]
      \centering
            \subfigure[Network energy efficiency versus the number of slices.]
        {\includegraphics[width=0.45\textwidth]{./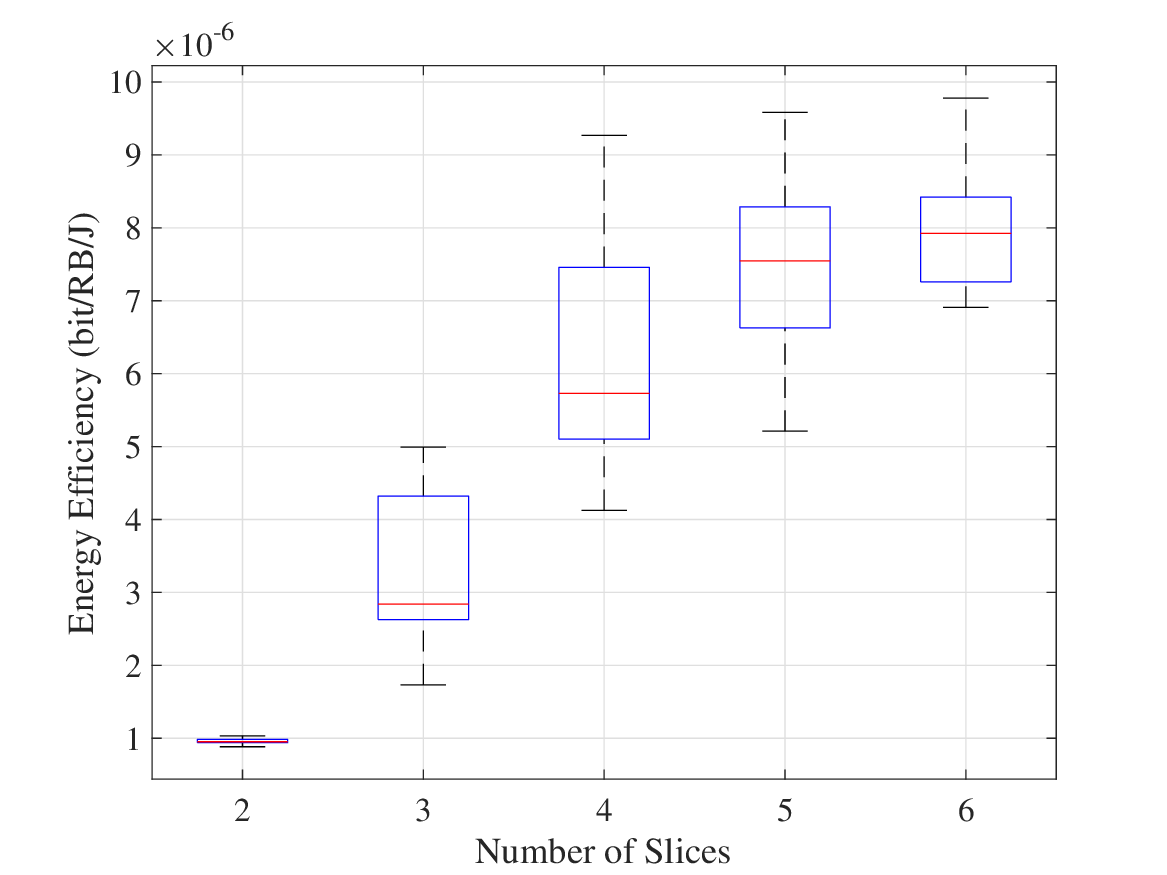}} 
            \subfigure[Network energy efficiency versus the total available bandwidth of each BS.]
        {\includegraphics[width=0.45\textwidth]{./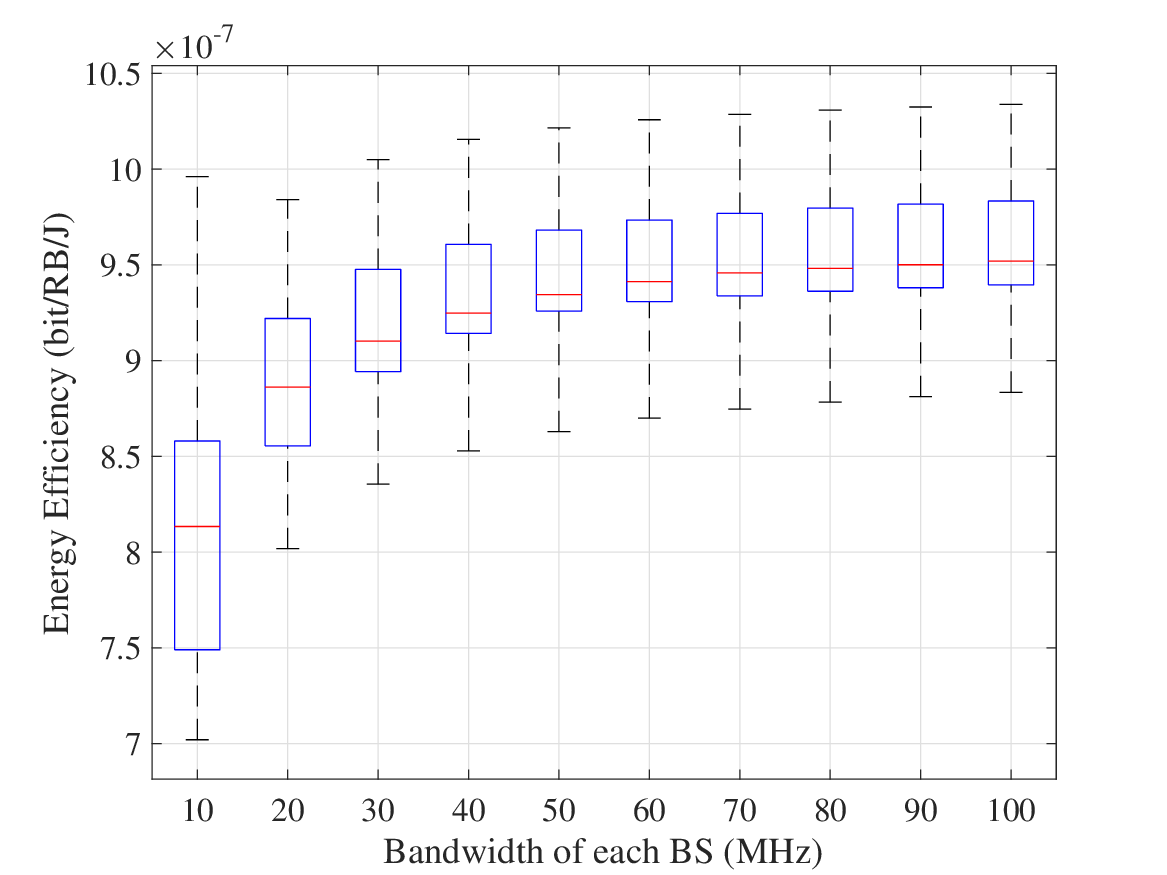}}\\
      \caption{The performance of the developed RAN slicing framework in different network scenarios.}
        \label{r:R1}
    \end{figure*}

In Fig.~\ref{r:R1}, we show the performance of the developed RAN slicing framework in different network scenarios. Considering the network with 1 MBS, 4 SBSs, and the grid diameter of 150\,m, we show the network energy efficiency versus the number of slices and the total available bandwidth of each BS in Fig.~\ref{r:R1}(a) and Fig.~\ref{r:R1}(b), respectively. A box plot representing the range of network energy efficiency over 10 independent simulation runs is shown in Fig.~\ref{r:R1}(a), in which the number of slices is set from 2 to 6, and the overall data traffic load of all slices is fixed in each simulation run. We can make the following two observations. First, the network energy efficiency of the developed scheme increases with the number of slices. This is because, for the same DTD, the number of decision variables of IM and SCM in the developed scheme increases with the number of slices, thereby improving the granularity of slicing-based resource management. As a result, the developed scheme can achieve higher energy efficiency by balancing the overall data traffic load across BSs due to the refined granularity in the slice dimension. Second, the effect of increasing the number of slices on the network energy efficiency diminishes when the number of slices increases since it becomes more difficult for IM and SCM to satisfy the SINR requirement of each slice.

In Fig.~\ref{r:R1}(b), varying the total bandwidth of each BS from 10\,MHz to 100\,MHz, we present the box plot of network energy efficiency over 10 independent simulation runs for each bandwidth setting. We can observe that network energy efficiency increases with the total available bandwidth of each BS. This is because, for the same data traffic load of each BS, increasing the total bandwidth of each BS can reduce the likeliness of the planning-stage interference among BSs (as discussed in Section IV.B). Consequently, the required transmission power to satisfy the SINR requirement of each slice is reduced, thereby improving the network energy efficiency.

\subsection{Performance of the ULSCS Algorithm}

In this subsection, we evaluate the energy efficiency performance of the proposed ULSCS algorithm and the LOSCS algorithm as well as the impact of the number of data records i.e., $|\Upsilon|$, and the number of selected data records, i.e., $|\Upsilon^\text{re}|$. We consider a network with 8~SBSs, 1~MBS, and 2 slices.

    \begin{figure}[t]  
     \centering   
      \includegraphics[width=0.45\textwidth]{./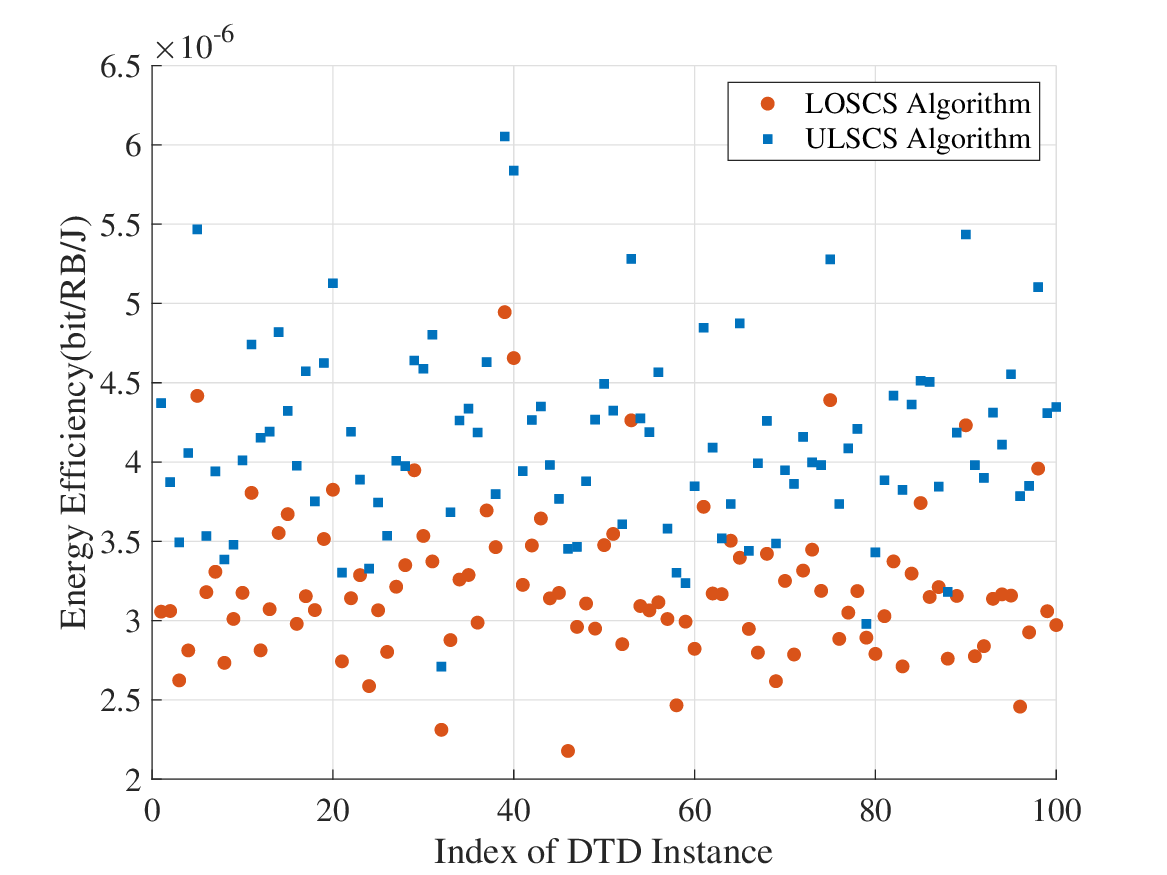}\\  
      \caption{Network energy efficiency comparison between the LOSCS and ULSCS algorithms.}\label{r:ee_data}
    \end{figure}

In Fig.~\ref{r:ee_data}, we compare the energy efficiency performance of the ULSCS and LOSCS algorithms for 100 cases with different DTD instances. The network energy efficiency achieved by the ULSCS algorithm is higher than that achieved by the LOSCS algorithm in all cases. The ULSCS algorithm selects some historical solutions to use as the initial settings of the LOSCS algorithm, which results in relatively high performance compared to the LOSCS algorithm. The worst-case network energy efficiency of the ULSCS algorithm equals that obtained by the LOSCS Algorithm.

    \begin{figure}[t]
      \centering
        \subfigure[The impact of $|\Upsilon|$.]
        {\includegraphics[width=0.45\textwidth]{./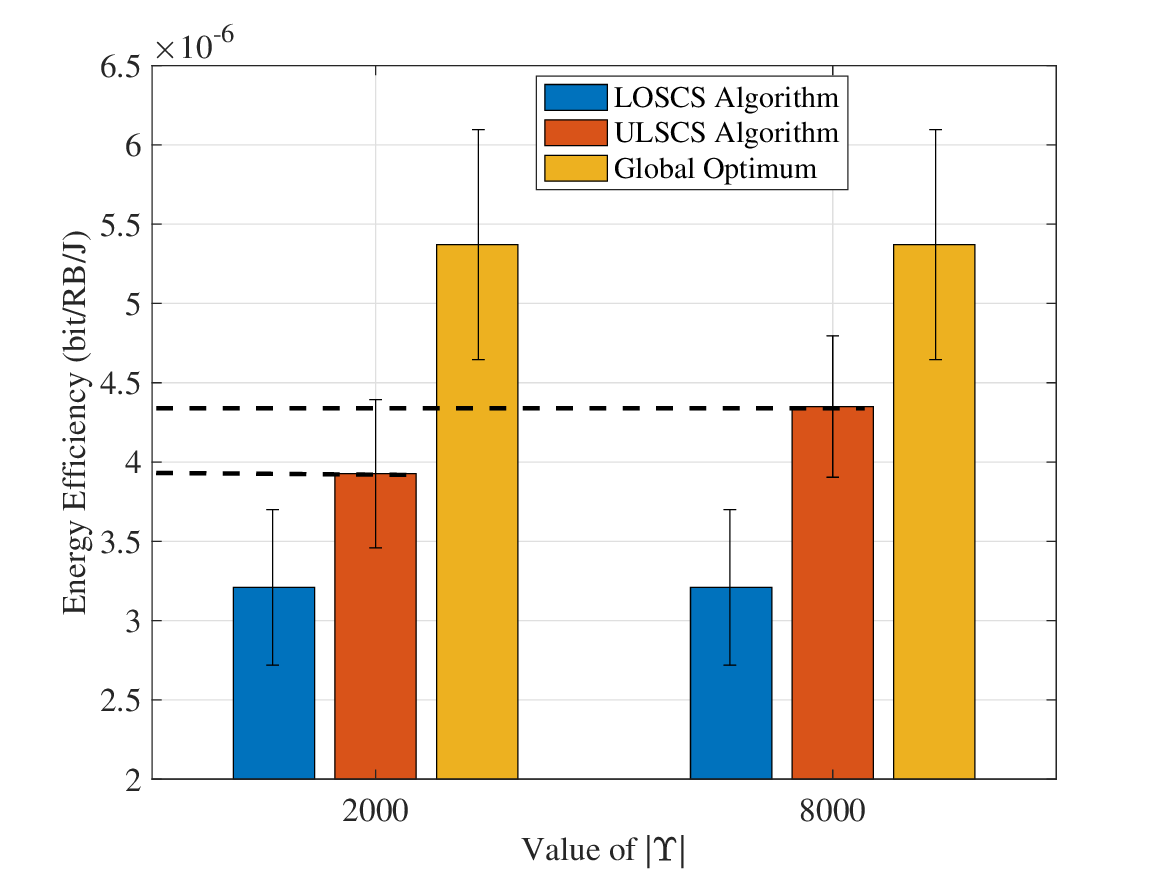}}
        \subfigure[The impact of $|\Upsilon^\text{re}|$.]
        {\includegraphics[width=0.45\textwidth]{./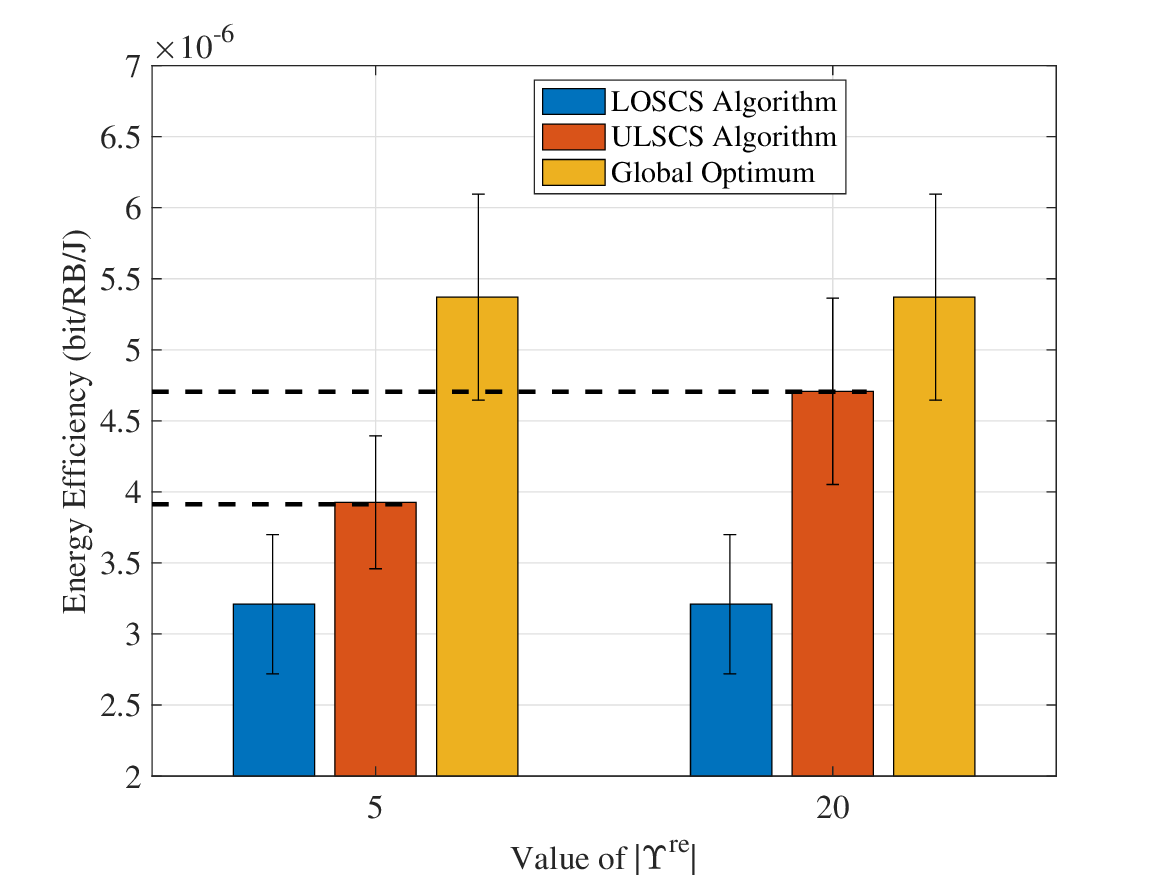}} \\
      \caption{Network energy efficiency given different values of~$|\Upsilon|$ and~$|\Upsilon^\text{re}|$, respectively.}
        \label{r:ee}
    \end{figure}

In Fig.~\ref{r:ee}(a), we evaluate the network energy efficiency of the ULSCS algorithm, averaged over 40 DTD instances, given different number of data records, i.e., different values of~$|\Upsilon|$. Two observations can be made in Fig.~\ref{r:ee}(a). First, the performance gap between the the ULSCS algorithm and the LOSCS algorithm increases when more data records are used. This is because having more data records in~$\Upsilon$ can improve the performance of DNN training and provide a large number of historical DTD instances for solution refinement. Second, the performance of the ULSCS algorithm can approach the optimum global value, especially when a large value of~$|\Upsilon|$ is used. 
Moreover, we examine the impact of the number of selected data records, i.e., different values of~$|\Upsilon^\text{re}|$, in Fig.~\ref{r:ee}(b). The performance gap between the ULSCS algorithm and the LOSCS algorithm increases with the number of selected data records, and performance of the ULSCS algorithm can approach the global optimum when a larger number of selected historical solutions are used for solution refinement. This is because more data records in~$|\Upsilon^\text{re}|$ result in more similar DTD instances being selected as the initial settings in the ULSCS algorithm, and thus benefit achieving global optimum. Consequently, Fig.~\ref{r:ee} demonstrates the potential of the AI-assisted approach to address the slicing-based resource management problems.

\section{Conclusion and Future Work}

In this paper, we have designed a RAN slicing framework for a two-tier RAN to determine the SC and transmission power of the BSs. The proposed framework introduces customized SC for different services and improves the granularity of IM to suit service demands in the spatial, temporal, and slice dimensions. Based on the framework, a network energy efficiency maximization problem has been formulated, which takes into account the inter-slice and intra-slice interference and diverse QoS requirements of slices. The proposed AI-assisted approach decouples the problem into two sub-problems and  solve them by incorporating deep unsupervised learning with optimization methods. The results have demonstrated the effectiveness of the proposed RAN slicing framework in improve energy efficiency, and the efficiency of the developed AI-assisted approach. The proposed framework and approach extend the advantages of slicing-based resource management towards supporting diverse services in RANs. In the future, we will investigate slicing-based resource management considering the coupling between the planning and operation stages.


\appendix

\subsection{Proof of Theorem~\ref{theorem1}}\label{appendix:theorem1}

Let $\Delta^{t}$ denote the network energy efficiency in time interval $t \in \mathcal{T}$ and define 
    \begin{equation}
    	\varsigma_n^t = \sum_{m \in \mathcal M}{\sum_{i \in \mathcal I_{m,n}}{ \tau p_{i,n}^{t} } }, \,\,\,\, n \in \mathcal N,
    \end{equation}
and 
    \begin{equation}
    	\chi_{n}^{t} = \frac{ \sum_{ m \in \mathcal M}{ \sum_{ i \in \mathcal{I}_{m,n}}{w_{i,n}^{t}} }  }{\sum_{ m \in \mathcal M}{\sum_{ i \in \mathcal{I}_{m,n}}{w_{i,n}^{t} \eta_{n} }} }, \,\,\,\, n \in \mathcal N. 
    \end{equation}
The network energy efficiency in time interval $t$ is given by:
	\begin{equation}
    	\Delta^{t} = \sum_{ n \in \mathcal N}{ \frac{ \lambda_{n} \chi_{n}^{t} }{ \varsigma_n^t  } }.
    \end{equation}
The Hessian matrix of the network energy efficiency $\Delta^t$ can be written as the following block matrix:
	\begin{equation}
        \begin{aligned} 
    		& \nabla^2 \Delta^t = \\
            & \frac{\partial^2 \Delta^t}{\partial p_{x,y}^{t} \partial p_{x',y'}^{t}}   = \left[\begin{matrix}
    			 \mathbf{A}_{1}^t & & & & \mathbf{0}\\
    			 &	\ddots&	& &\\
    			 &	& \mathbf{A}_{n}^t &	&\\
    			 &	&	& \ddots&	\\
    			 \mathbf{0} & &	& & \mathbf{A}_{N}^t\\
    		\end{matrix}\right]_{IN \times IN}, 
        \end{aligned} 
	\end{equation}
where block $\mathbf{A}_{n}^t$ for any $n \in \mathcal N$ is given by:
    \begin{equation}
       \mathbf{A}_{n}^t = \frac{2 \lambda_{n} \chi_{n}^t  \tau^2 }{ (\varsigma_n^t)^{3} } \begin{bmatrix}
            1 & 1 & \cdots & 1 \\
            1 & 1 & \cdots & 1 \\
            \vdots & \vdots & \ddots & \vdots\\
            1 & 1 & \cdots& 1 
        \end{bmatrix}_{I \times I} 
    \end{equation}

If constraint (\ref{p1}d) is satisfied, $\varsigma_n^t$ is positive. In this case, the first-order leading principal minor of the Hessian matrix, i.e., $\frac{2 \lambda_{n} \chi_{n}^{t} \tau^2}{ (\varsigma_n^t)^{3}}$, is nonnegative. Meanwhile, all the other leading principal minors equal $0$. As a result, the Hessian matrix is positive semidefinite when constraint (\ref{p1}d) is satisfied. Thus, when $\forall p_{i,n}^{t} > 0$, the function $\Delta^t$ is convex. 

Function~$\Delta^t$ increases with the decrease of allocated transmission power for all grids, while the allocated transmission power for all grids should satisfy the SINR constraints in~\eqref{eq9}. Consequently, due to the convexity of function~$\Delta^t$, the IM solution must exist on the boundary of the feasible domain. Thus, the optimal IM solution should satisfy~\eqref{eq9} with equality, i.e.,

	\begin{equation}\label{eq22p}
	   \frac{\bar{p}_{i,n}^{t} h_{i,n,i,n}^{t} }{N_0 + I_{i,n}^{t} } = \rho \gamma_{n}^\mathrm{min}.
    \end{equation}
Define $\hat{\mathbf{H}}^t$ and $\mathbf{\Omega}_{n,n'}^{t}$ in~\eqref{eq18p} and~\eqref{eq13}, respectively. We rewrite~\eqref{eq22p} into the matrix format as~\eqref{eq30}. Therefore, the optimal downlink transmission power in time interval~$t$ can be derived in closed-form  as~\eqref{eq12}.

    \begin{figure*}[t] 
        \begin{equation}\label{eq30}
            \begin{split}
            	& \hat{\mathbf{H}}^t  \cdot \mathbf{p}^t = \rho \left( \left[ \begin{matrix}
            		\gamma _{1}^{\min}\mathbf{\Omega }_{1,1}^{t} &		\cdots&		\gamma _{1}^{\min}\mathbf{\Omega }_{1,n'}^{t}&		\cdots&		\gamma _{1}^{\min}\mathbf{\Omega }_{1,N}^{t}\\
            		\vdots&		\ddots&		\vdots&		&		\vdots\\
            		\gamma _{n}^{\min}\mathbf{\Omega }_{n,1}^{t} &		\cdots&		\gamma _{n}^{\min}\mathbf{\Omega }_{n,n'}^{t} &		\cdots&		\gamma _{n}^{\min}\mathbf{\Omega }_{n,N}^{t} \\
            		\vdots&		&		\vdots&		\ddots&		\vdots\\
            		\gamma _{N}^{\min}\mathbf{\Omega }_{N,1}^{t} &		\cdots&		\gamma _{N}^{\min}\mathbf{\Omega }_{N,n'}^{t} &		\cdots&		\gamma _{N}^{\min}\mathbf{\Omega}_{N,N}^{t}\\
            		\end{matrix} \right]_{IN \times IN} \mathbf{p}^t + \left[ \begin{matrix}
            		\gamma_{1}^{\min} N_0\\
            		\vdots\\
            		\gamma_{n}^{\min} N_0\\
            		\vdots\\
            		\gamma_{N}^{\min} N_0
            		\end{matrix} \right]_{IN \times 1} \right).
            \end{split}
        \end{equation}
        \rule[-10pt]{18.15cm}{0.05em}  
    \end{figure*}

\bibliography{ref}

\bibliographystyle{IEEEtran}

\end{document}